\documentclass[twocolumn, pra]{revtex4}
\usepackage{amsmath,amssymb,amsthm,epsfig,graphics,graphicx,hyperref}
\usepackage{verbatim}

\linespread{1.2}

\newtheorem{theorem}{Theorem}
\newtheorem{lemma}{Lemma}
\newtheorem{corollary}{Corollary}
\theoremstyle{definition}

\newtheorem*{remark}{Remark}
\newcommand{\bra}[1]{\langle #1|}
\newcommand{\ket}[1]{|#1\rangle}


\newcommand{\op}[2]{|#1\rangle \langle #2|}
\newcommand{\slocc}{\overset{\underset{\mathrm{SLOCC}}{}}{\longrightarrow}}

\begin{document}

\title{New Entanglement Monotones for W-type States}

\author{E. Chitambar}
\email{e.chitambar@utoronto.ca}
\author{W. Cui}
\email{cuiwei@physics.utoronto.ca}
\author{H.K. Lo}
\email{hklo@comm.utoronto.ca}
\thanks{\\ $^*{}^\dagger$ Authors contributed equally to this project.}

\affiliation{Center for Quantum Information and Quantum Control (CQIQC),
Department of Physics and Department of Electrical \& Computer Engineering,
University of Toronto, Toronto, Ontario, M5S 3G4, Canada}

\date{\today}

\begin{abstract}
In this article, we extend recent results concerning random-pair EPR distillation and the operational gap between separable operations (SEP) and local operations with classical communication (LOCC).  In particular, we consider the problem of obtaining bipartite maximal entanglement from an $N$-qubit W-class state (i.e. that of the form $\sqrt{x_0}\ket{00...0}+\sqrt{x_1}\ket{10...0}+...+\sqrt{x_n}\ket{00...1}$) when the target pairs are \textit{a priori} unspecified.  We show that when $x_0=0$, the optimal probabilities for SEP can be computed using semi-definite programming.  On the other hand, to bound the optimal probabilities achievable by LOCC, we introduce new entanglement monotones defined on the $N$-qubit W-class of states.  The LOCC monotones we construct can be increased by SEP, and in terms of transformation success probability, we are able to quantify a gap as large as 37\% between the two classes.  Additionally, we demonstrate transformations $\rho^{\otimes n}\to \sigma^{\otimes n}$ that are feasible by SEP for any $n$ but impossible by LOCC.
\end{abstract}

\maketitle

\section{Introduction}

Quantum entanglement is a celebrated aspect of quantum theory and represents one of the sharpest departures from the classical world.  From a practical perspective, entanglement provides a key tool for novel technologies such as quantum teleportation \cite{Bennett-1993a}, dense coding \cite{Bennett-1992b}, and entanglement-based quantum cryptography \cite{Ekert-1991a}.  Formally treating entanglement as a physical resource involves specifying a quantitative measure so that it makes sense to discuss ``how much'' entanglement a certain quantum system possesses.  For bipartite pure states, the von Neumann entropy serves as the unequivocal measure of entanglement \cite{Popescu-1997a}.  However, for multi-partite pure states and even mixed bipartite states, there does not appear to exist one unifying entanglement measure \cite{Linden-2005a, Acin-2003a}.  Instead, it seems more appropriate to quantify the amount of entanglement in a given system relative to some particular task or physical characteristic.  

A necessary (and arguably sufficient) property that every entanglement measure must satisfy is the so-called \textit{LOCC constraint} \cite{Vedral-1998a, Vidal-2000a, Horodecki-2000a, Horodecki-2001a, Plenio-2005a, Plenio-2007a}.  In a realistic multi-partite setting, each party will possess a laboratory in which he/she performs quantum measurements on only one part of the whole system.  At the same time, the parties may wish to coordinate their measurement strategies by using a classical communication channel to share their measurement outcomes.  This paradigm is known as LOCC (local operations and classical communication), and it describes the basic setting for nearly all practical quantum communication schemes.  The LOCC constraint means that entanglement cannot be increased on average by LOCC.  Therefore, a function $\mu$ fulfills the LOCC constraint if for any LOCC process that converts $\rho$ into $\sigma_i$ with probability $p_i$, the following inequality holds: $\mu(\rho)\geq\sum_ip_i\mu(\sigma_i)$.  

While it is very easy to describe the idea of LOCC operations, giving a precise mathematical description is notoriously difficult \cite{Bennett-1999a, Rains-1999a, Donald-2002a}.  For many purposes - such as upper bounding the success probability of some LOCC task - a finely-tuned description is not necessary.  Instead, one can turn to a more general (but not too general) class of quantum operations and see what's possible under this relaxation.  The most natural approximation to LOCC is the class of separable operations (SEP).  For an $N$-partite quantum system, an operation is called separable if it admits a Kraus operator representation $\mathcal{E}(\cdot)=\sum_\lambda A_\lambda(\cdot)A_\lambda^\dagger$ where $A_\lambda=M_{1,\lambda}\otimes  M_{2,\lambda}\otimes...\otimes M_{N,\lambda}$.  As every LOCC operation is built by a successive composition of local maps $\mathcal{E}^{(k)}\otimes\mathbb{I}^{(\overline{k})}$, it follows that every LOCC map is separable.  Compared to LOCC, the structure of SEP is easier to analyze, and studying it has been useful for proving LOCC impossibility results \cite{Rains-1997a, Kent-1998a, Vedral-1998a, Chefles-2004a, Horodecki-1998a, Stahlke-2011a}. 

A somewhat unexpected finding is the existence of separable operations that cannot be implemented by LOCC \cite{Bennett-1999a}.  A dramatic example of this is the phenomena of ``nonlocality without entanglement'' which refers to certain sets of product states that can be distinguished by SEP but not by LOCC \cite{Bennett-1999a, Bennett-1999b}.  Following the initial finding that LOCC $\subsetneq$ SEP, additional examples were constructed that demonstrated this fact \cite{DiVincenzo-2003a, Koashi-2007a, Duan-2007a, Chitambar-2009b, Cui-2011a}.  Like LOCC, separable operations have the property that they cannot generate entanglement.  Indeed, if a separable map is applied to a general separable state $\sum_ip_i\rho_i^{(1)}\otimes...\otimes\rho^{(N)}_i$, the resultant state will likewise be separable.  The fact that LOCC $\not=$ SEP then implies a certain irreversibility to the non-LOCC separable maps since these operations are unable to create entanglement, but nevertheless they require some pre-shared entanglement to be performed in the multi-partite setting.  Thus, such maps may be interpreted as the operational analog to ``bound entanglement'' \cite{Horodecki-1998a}, where the latter refers to multi-partite states that cannot be converted into pure entanglement but nevertheless require some initial entanglement to be created.  Consequently, studying the gap between LOCC and SEP is crucial to understanding the nature of quantum entanglement.

Unfortunately, very little quantitative research has been conducted into the difference between LOCC and SEP.  Thus it becomes difficult to say just how much more powerful SEP is than LOCC.  Previous numerical results that compared SEP versus LOCC for the task of distinguishing certain quantum states was very small in scale.  For instance, Ref. \cite{Bennett-1999a} demonstrated a minimum of $O(10^{-6})$ between the two classes (in terms of the attainable mutual information), while in Ref. \cite{Koashi-2007a}, optimal success probabilities in distinguishability were shown to diverge by at most $.8\%$.  Recently, however, we were able to provide the first appreciable gap between SEP and LOCC in terms of a 12.5\% difference in probability for successfully performing a particular state transformation \cite{Chitambar-2012a}.  In this article we vastly improve on our previous result and demonstrate a percent difference of 37\% between LOCC and SEP.  The key step in proving this result is the construction of new entanglement monotones for a particular subset of $N$-qubit states that can be increased by separable operations.

Specifically, we turn to the problem of randomly distilling an EPR pair from one copy of a multipartite W-class state, as first initiated by Fortescue and Lo \cite{Fortescue-2007a, Fortescue-2008a}.  An EPR random distillation refers to a transformation of multipartite entanglement into bipartite maximal pure entanglement in which the two parties sharing the final entanglement are allowed to vary among the different outcomes.  We denote such a transformation by \hypertarget{star}{}
\begin{align*} 
&\hspace{1.5cm}&&&&&\ket{\varphi}_{1...N}\to\big\{p_{ij},\ket{\Phi^{(ij)}}\big\}&\hspace{.8cm}&&&&&&(\star)
\end{align*}
where $\ket{\Phi^{(ij)}}$ is a maximally entangled two-qubit state shared between parties $i$ and $j$ obtained from $\ket{\varphi}_{1...N}$ with probability $p_{ij}$.  

It is often more convenient to represent transformation (\hyperlink{star}{$\star$}) by a \textit{distillation configuration graph} $\mathcal{G}=(V,E\subset V\times V)$ in which each party $i$ is assigned to a vertex $v_i\in V$, and an edge $(i,j)\in E$ is drawn between $v_i$ and $v_j$ if and only if $p_{ij}$ is nonzero (see Fig. \ref{RDfig1}).  Let $E_k\subset E$ denote the set of edges connected to vertex $v_k$.  

\begin{figure}[t]
\includegraphics[scale=0.6]{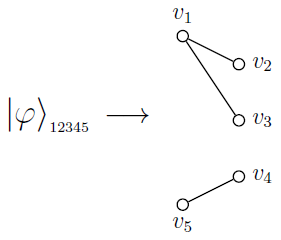}
\caption{\label{RDfig1} Graph representation $\mathcal{G}$ of one particular EPR random distillation configuration for the state $\ket{\varphi}_{12345}$.  Each edge represents a possible outcome EPR state shared between the two parties corresponding to the connected nodes.  The probability of obtaining a given edge is $p_{ij}>0$, and $E_k$ is the set of all edges connected to vertex $v_k$.} 
\end{figure} 

In terms of overall success probability, often the random distillation of some state can be more efficient than if entanglement is distilled to a fixed pair.  Perhaps the most impressive demonstration of this effect is the \textit{Fortescue-Lo Protocol} which performs transformation (\hyperlink{star}{$\star$}) on the three qubit W-state $\ket{W_3}=\sqrt{1/3}\left(\ket{100}+\ket{010}+\ket{001}\right)$ for any value of $p_{12}+p_{23}+p_{13}$ less than one; this should be compared to the maximum probability for transformation $\ket{W_3}\to\ket{\Phi^{(ij)}}$ that is $2/3$ for any $i\not=j\in\{1,2,3\}$ \cite{Fortescue-2007a}.  The Fortescue-Lo Protocol also extends to distilling an EPR pair from $\ket{W_N}$ with probability arbitrarily close to one.  Such a finding demonstrates the importance of considering random distillations in the multipartite setting.  

\subsubsection*{\upshape \bf Summary of Main Results and Article Outline}

This article compares the LOCC versus SEP feasible probabilities of transformation (\hyperlink{star}{$\star$}) when $\ket{\varphi}_{1...N}$ belongs to the $N$-qubit W-class of states, i.e. any state reversibly obtainable from $\ket{W_N}=\sqrt{1/N}\left(\ket{10...0}+\ket{01...0}+...+\ket{00...1}\right)$ by LOCC with a non-zero probability.  We begin our investigation in Section \ref{Sect:KT} with a review of results by Kinta\c{s} and Turgut on the subject of W-class transformations \cite{Kintas-2010a}.  There we also define the notation used throughout the paper.

In Section \ref{Sect:SEP} we show that for states of the form $\sqrt{x_1}\ket{10...0}+...+\sqrt{x_N}\ket{00...1}$, the possibility of transformation (\hyperlink{star}{$\star$}) by SEP can be phrased as a semi-definite programming feasibility question.  Thus, numerically it has an efficient solution.  When the initial state is $\ket{W_N}$, we are able to obtain simple necessary and sufficient criteria for transformation feasibility by studying the dual problem, as carried out in Appendix \ref{Apx:SDP}.  Note that the results of this section also provide LOCC upper bounds.

Next in Section \ref{Sect:monotone}, we turn to the LOCC setting specifically, and we introduce two new types of entanglement monotones defined on the $N$-qubit W-class of states.  To prove that these functions are monotonic under LOCC, we decompose a general LOCC transformation into a sequence of local weak measurements.  However, these functions are not monotonic under separable operations.  While a general separable measurement can also be decomposed into a sequence of weak measurements, these measurements need not be local, and our functions are sensitive precisely to this relaxation in constraint.  We prove that the monotones have operational meanings as the supremum success probabilities for the distillation of EPR states for certain distillation configuration graphs.  Moreover, these monotones can be saturated by an ``equal or vanish'' measurement scheme, which we further describe in Section \ref{Sect:monotone}.  Thus we are able to prove LOCC optimal rates for certain configuration graphs of transformation (\hyperlink{star}{$\star$}).  In particular, we solve the one-shot analog of ``entanglement combing'' studied by Yang and Eisert \cite{Yang-2009a} in which one particular party is selected to be a shareholder of the bipartite entanglement for each of the possible outcomes.  Formal comparisons between SEP and LOCC for transformation (\hyperlink{star}{$\star$}) are made in Section \ref{Sect:SEPvsLOCC}.

Finally, in Section \ref{Sect:ncopy} we move beyond the single-copy case and investigate a particular $n$-copy random distillation problem.  Interestingly, we we are able to show the existence of a state transformation $\op{\psi}{\psi}^{\otimes n}\to \rho^{\otimes n}$ that, for any $n$, is impossible by LOCC but always possible by SEP.  This result is the first of its kind.  Brief concluding remarks are then given in Section \ref{Sect:Conclusion}. 

\subsubsection*{\upshape \bf Relationship to Previous Work}
 
This article complements recent work we have conducted on the random distillation problem and its connection to the structure of LOCC \cite{Cui-2011a, Chitambar-2012a}.  In particular, Ref. \cite{Cui-2011a} presents a general LOCC procedure for completing transformation (\hyperlink{star}{$\star$}) on W-class states and also computes tight bounds for four qubit systems.  The distinguishing feature of this article is a solution to (\hyperlink{star}{$\star$}) by separable operations for a wide class of states and the construction of $N$-partite entanglement monotones that generalize those presented in \cite{Chitambar-2012a}.  Additionally, we consider here the many-copy variant of the random distillation problem, which has previously only been investigated in Ref. \cite{Fortescue-2008a}.  

\section{Notation and the Kinta\c{s} and Turgut Monotones} 
\label{Sect:KT}
Throughout the paper, we will be dealing exclusively with pure states $\ket{\varphi}_{1...N}$.  If ever we wish to express the state as the rank one density operator $\op{\varphi}{\varphi}_{1...N}$, we will denote it as $\varphi^{(1...N)}$.  For some operator $A$ acting on a multi-partite state space, we will let $A^{\Gamma_i}$ denote its partial transpose in the computational basis with respect to a party (or parties) $i$. 

It is often useful to consider two states equivalent if they can be reversibly converted from one to the other by LOCC with some nonzero probability.  Such a transformation is known as stochastic LOCC (SLOCC), and the well-known criterion for $\ket{\varphi}_{1...N}\slocc\ket{\varphi'}_{1...N}$ is the existence of invertible $M^{(k)}$ such that $\bigotimes_{k=1}^NM^{(k)}\ket{\varphi}_{1...N}=\ket{\varphi'}_{1..N}$ \cite{Dur-2000a}.  In this way, multipartite state space can then be partitioned into SLOCC equivalence classes.  

The $N$-party W-class is the set of states SLOCC equivalent to $\ket{W_N}=\sqrt{1/N}\left(\ket{10...0}+\ket{01...0}+\ket{00...1}\right)$, and such states take the form $\sqrt{x_0}\ket{00...0}+\sqrt{x_1}\ket{10...0}+...+\sqrt{x_n}\ket{00...1}$.  More importantly, even after a local unitary (LU) transformation - $\ket{0}\to\ket{0'}$ and $\ket{1}\to\ket{1'}$ - the component values $\sqrt{x_i}$ always remain unchanged for $N\geq 3$ \cite{Kintas-2010a}.  Therefore, we can uniquely characterize any W-class state by the $N$-component vector:
\begin{align}
\vec{x}=(x_1,&x_2,...,x_N)\notag\\
&\updownarrow\notag\\
\sqrt{x_0}\ket{00...0}+\sqrt{x_1}&\ket{10...0}+...+\sqrt{x_n}\ket{00...1},
\end{align} and $x_0=1-\sum_{i=1}^Nx_i$.  When $N=2$, uniqueness can be ensured by demanding that $x_0=0$ and $x_1\geq x_2$.

The order in value of these components will be highly important to our investigation.  Thus, we will often use the indices $\{n_1,n_2,...,n_N\}=\{1,2,...,N\}$ such that $x_{n_1}\geq x_{n_2}\geq ...\geq x_{n_N}$.  We let $n_1(\vec{x})$ denote the largest component in the state $\vec{x}=(x_1,...,x_N)$.

A main result of Kinta\c{s} and Turgut's work is proving that the component values, $-x_0$ and $x_i$ for $1\leq i\leq N$, are entanglement monotones \cite{Kintas-2010a}.  In other words, for an LOCC transformation converting $\vec{x}\to \vec{x}_\lambda$ with probability $p_\lambda$, the following relations hold:
\begin{align}
\label{Eq:KTmono}
x_0&\leq\sum_\lambda p_\lambda x_{\lambda,0}& x_i&\geq\sum_\lambda p_\lambda x_{\lambda,i}
\end{align}
for all $1\leq i\leq N$.
We will refer to these as the \textit{K-T monotones} and they place an upper bound of $\min\{x_i/y_i\}_{i=1...N}$ on the probability for any W-class transformation $\vec{x}\to\vec{y}$.  Recently, necessary and sufficient conditions were obtained for when this upper bound can be achieved \cite{Cui-2010b}.

To study the effects of measurement on a W-class state, first note that any measurement operator $A$ is a $2\times 2$ matrix expressible in the form $A=U\cdot \left(\begin{smallmatrix}\sqrt{a} & b  \\0 & \sqrt{c} \end{smallmatrix}\right )$ where $U$ is a unitary matrix.  Thus, up to a final local unitary operation, any local measurement corresponds to a set of upper triangular matrices $\{M_\lambda\}_\lambda$ with $\sum_\lambda M^\dagger_\lambda M_\lambda=\mathbb{I}$.  When it is party $k$ who performs the measurement, we will denote the measurement operators by $M_\lambda^{(k)}$.  It is easy to see that this measurement on state $\sqrt{x_0}\ket{00...0}+\sqrt{x_1}\ket{10...0}+...+\sqrt{x_N}\ket{00...1}$ will transform the components as:
\begin{align}
\label{starstar}
&\hspace{.6cm}&x_k&\to\frac{c_\lambda}{p_\lambda}x_k, &x_j&\to\frac{a_\lambda}{p_\lambda}x_j\;\;\;1\leq j\not=k\leq N,
\end{align}
where $p_\lambda$ is the probability that outcome $\lambda$ occurs.  We can simplify matters even further by noting that any transformation possible by LOCC can always be achieved by a protocol in which each party only performs two-outcome measurements \cite{Anderson-2008a}.  Since our chief concern is the possibility of transformations, we can assume without loss of generality that each local measurement consists of two upper triangular matrices $\{M^{(k)}_1, M^{(k)}_2\}$ whose entries are
\begin{align}
\label{Eq:constraints}
M^{(k)}_1&=\begin{pmatrix}\sqrt{a_1} & b_1  \\0 & \sqrt{c_1}\end{pmatrix} & M^{(k)}_2&=\begin{pmatrix}\sqrt{a_2} & b_2  \\0 & \sqrt{c_2}\end{pmatrix}
\end{align}
with $a_1+a_2=1$ and $c_1+c_2\leq 1$, in which equality is achieved by the latter if and only if $M^{(k)}_1$ and $M^{(k)}_2$ are both diagonal.

\section{Separable Transformations}
\label{Sect:SEP}

In this section we derive the conditions for which transformation (\hyperlink{star}{$\star$}) is possible by separable operations when the initial state is a W-class state with $x_0=0$.  As shown in the following lemma, the unique structure of such states allows for a major simplification in the analysis.

\begin{lemma}
\label{Lem:SEPsimp}
Suppose that $\{\Pi_\lambda:=M^{(1)}_\lambda\otimes...\otimes M^{(N)}_\lambda\}_{\lambda=1...t}$ corresponds to a complete measurement that achieves transformation {\upshape (\hyperlink{star}{$\star$})} with probabilities $p_{12},...,p_{N-1N}$ when $\ket{\varphi}_{1...N}=\sqrt{x_1}\ket{10...0}+...+\sqrt{x_N}\ket{00...1}$.  Then up to local unitary operations, there exists a measurement $\{\hat{M}^{(1)}_\lambda\otimes...\otimes \hat{M}^{(N)}_\lambda\}_{\lambda=1...2t}$ that achieves transformation {\upshape (\hyperlink{star}{$\star$})} with the same probabilities and with each $\hat{M}^{(k)}_\lambda$ being diagonal.
\end{lemma}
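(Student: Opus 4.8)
The plan is to take an arbitrary separable measurement achieving $(\star)$ and massage it, in three moves, into a diagonal one with at most twice as many Kraus operators. The first move strips the local unitaries: writing each $M^{(k)}_\lambda=U^{(k)}_\lambda T^{(k)}_\lambda$ with $T^{(k)}_\lambda$ upper triangular, the positive operators $\Pi_\lambda^\dagger\Pi_\lambda=\bigl(\bigotimes_k T^{(k)}_\lambda\bigr)^\dagger\bigl(\bigotimes_k T^{(k)}_\lambda\bigr)$ --- hence the completeness relation --- are unchanged by discarding the $U^{(k)}_\lambda$, while the residual local unitary $\bigotimes_k U^{(k)}_\lambda$ affects neither the success probabilities nor whether an outcome is maximally entangled across the relevant cut. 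So we may assume every $M^{(k)}_\lambda=\left(\begin{smallmatrix}\sqrt{a^{(k)}_\lambda}&b^{(k)}_\lambda\\0&\sqrt{c^{(k)}_\lambda}\end{smallmatrix}\right)$.

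The heart of the argument is a structural observation about EPR outcomes. Since $x_0=0$ (and, being a genuine $N$-qubit W-class state, $x_j>0$ for all $j$), $\ket\varphi$ lies in the single-excitation subspace spanned by $\ket{e_j}:=\ket{0\cdots1_j\cdots0}$, and
\[
\Pi_\lambda\ket\varphi=\beta_\lambda\ket{0\cdots0}+\sum_j\gamma^\lambda_j\ket{e_j},\quad \beta_\lambda=\sum_j\sqrt{x_j}\,b^{(j)}_\lambda\!\!\prod_{k\neq j}\!\sqrt{a^{(k)}_\lambda},\quad \gamma^\lambda_j=\sqrt{x_j c^{(j)}_\lambda}\!\!\prod_{k\neq j}\!\sqrt{a^{(k)}_\lambda}.
\]
If outcome $\lambda$ produces $\ket{\Phi^{(ij)}}$, then Schmidt-decomposing across $\{i,j\}$ versus the remaining $N-2$ parties --- which carry an excitation precisely in the kets $\ket{e_m}$ with $m\neq i,j$ --- forces $\gamma^\lambda_m=0$ for $m\neq i,j$, and the surviving two-qubit vector $(\beta_\lambda\ket{00}+\gamma^\lambda_i\ket{10}+\gamma^\lambda_j\ket{01})_{ij}$ is maximally entangled iff its $2\times2$ coefficient matrix is a nonzero multiple of a unitary, which (as $\gamma^\lambda_i\neq0$) forces $\beta_\lambda=0$ and $|\gamma^\lambda_i|=|\gamma^\lambda_j|\neq0$; in particular $a^{(k)}_\lambda>0$ for all $k$ and $c^{(m)}_\lambda=0$ for $m\neq i,j$. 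The crucial point is $\beta_\lambda=0$: the off-diagonal entries $b^{(k)}_\lambda$ feed only the $\ket{0\cdots0}$-component, so replacing each $M^{(k)}_\lambda$ by its diagonal part $\hat M^{(k)}_\lambda:=\mathrm{diag}(\sqrt{a^{(k)}_\lambda},\sqrt{c^{(k)}_\lambda})$ leaves $\Pi_\lambda\ket\varphi$ \emph{exactly} fixed on every EPR-producing outcome, and hence reproduces every target state with its original probability.

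It then remains to complete the diagonal operators $\{\hat\Pi_\lambda\}_{\lambda\text{ good}}$ to a full measurement. Since $\hat M^{(k)\dagger}_\lambda\hat M^{(k)}_\lambda$ is entrywise dominated by the diagonal of $M^{(k)\dagger}_\lambda M^{(k)}_\lambda$ and $\sum_\lambda\mathrm{diag}(\Pi_\lambda^\dagger\Pi_\lambda)=\mathrm{diag}(\mathbb{I})=\mathbb{I}$, the deficit $R:=\mathbb{I}-\sum_{\lambda\text{ good}}\hat\Pi_\lambda^\dagger\hat\Pi_\lambda$ is a positive semidefinite \emph{diagonal} operator with $\braket{\varphi}{R}{\varphi}=1-\sum_{i<j}p_{ij}$. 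I would fill $R$ in with further diagonal product operators chosen so as to act trivially on $\ket\varphi$: since any diagonal product operator maps the single-excitation subspace into itself, on $\ket\varphi$ it outputs either $0$ or a state with at most one excitation --- a product state --- and never a fresh EPR pair, so the distillation probabilities are untouched. Assigning at most one such filler to each original outcome keeps the total at $2t$.

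This last step is the real obstacle: one must show that $R$ genuinely decomposes into at most $2t-(\#\text{good outcomes})$ diagonal \emph{product} operators, each of which annihilates $\ket\varphi$ or sends it to a product state. The room to do so comes exactly from the structure pulled out in the middle step --- $a^{(k)}_\lambda>0$ and $c^{(m)}_\lambda=0$ at every good outcome, so that each $\hat\Pi_\lambda^\dagger\hat\Pi_\lambda$ is supported only on the two-qubit ``corner'' indexed by its distilled pair --- and it is this that should make the bookkeeping close with the claimed factor of two.
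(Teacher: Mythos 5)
Your first two steps are correct and essentially reproduce the paper's route while supplying the detail it labels ``straightforward to see'': after removing local unitaries, the fact that the two-qubit coefficient matrix $\left(\begin{smallmatrix}\beta_\lambda & \gamma^\lambda_j\\ \gamma^\lambda_i & 0\end{smallmatrix}\right)$ of an EPR outcome must be proportional to a unitary forces $\beta_\lambda=0$ and $\gamma^\lambda_m=0$ for $m\neq i,j$, so the off-diagonal entries $b^{(k)}_\lambda$ contribute nothing on the EPR branches and the diagonal parts $\hat M^{(k)}_\lambda$ reproduce every $\ket{\Phi^{(ij)}}$ with its original probability.

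The gap is the completion step, which you yourself flag as unresolved, and the heuristic you offer for it rests on a false claim: a diagonal product operator applied to $\ket{\varphi}$ is \emph{not} restricted to ``$0$ or a product state.'' By Eq. \eqref{starstar} it produces another W-class state with rescaled components, and one can easily choose diagonal local operators that kill all but two components and equalize the moduli of the survivors, i.e. the output can be a fresh EPR pair; so ``never a fresh EPR pair'' fails for arbitrary diagonal fillers, and your plan of decomposing the deficit $R$ into few such fillers is both unjustified and harder than necessary. The paper's resolution is simpler: from Eq. \eqref{Eq:seppsimp} the deficit $\mathbb{I}-\sum_\lambda\hat\Pi^\dagger_\lambda\hat\Pi_\lambda$ is diagonal, and it is filled using operators built from the discarded off-diagonal data, $\bigotimes_{k}\left(\begin{smallmatrix}0&0\\0&|b_{\lambda k}|\end{smallmatrix}\right)$, which annihilate $\ket{\varphi}$ outright (every single-excitation ket contains at least one $\ket{0}$, which such an operator kills), so they cannot add any EPR probability. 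Strictly speaking, exact completeness also requires the mixed operators $\bigotimes_{k\in S}\left(\begin{smallmatrix}0&0\\0&|b_{\lambda k}|\end{smallmatrix}\right)\otimes\bigotimes_{k\notin S}\hat M^{(k)}_\lambda$ for $\emptyset\neq S\subsetneq\{1,\dots,N\}$, but each of these is again a diagonal product mapping $\ket{\varphi}$ to $0$ or to a product state proportional to a single $\ket{0\cdots 1_j\cdots 0}$, so the distillation probabilities are still untouched; alternatively one may fill the diagonal deficit with rank-one products $\sqrt{r_z}\,\op{z}{z}$. Note finally that the exact number of outcomes ($2t$ or more) is immaterial to how the lemma is used later --- all that matters is that the residual $\mathbb{I}-\sum_\lambda\hat\Pi^\dagger_\lambda\hat\Pi_\lambda$ is diagonal and hence separable --- so the factor-of-two bookkeeping you treat as ``the real obstacle'' should not drive the argument.
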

\begin{proof}
Up to an LU operation, each $M^{(k)}_\lambda$ takes the form
 $M^{(k)}_\lambda=\left(\begin{smallmatrix}\sqrt{a_{\lambda k}} & b_{\lambda k}  \\0 & \sqrt{c_{\lambda k}} \end{smallmatrix}\right)$ so that
\begin{equation}
\label{Eq:seppsimp}
\sum_\lambda\Pi_\lambda^\dagger\Pi_\lambda=\sum_\lambda\bigotimes_{k=1}^N \begin{pmatrix}a_{k\lambda}&\sqrt{a_{\lambda k}}b_{\lambda k}\\\sqrt{a_{\lambda k}}b^*_{\lambda k}&|b_{\lambda k}|^2+c_{\lambda k}\end{pmatrix}=\mathbb{I}.
\end{equation}
Let $\hat{M}^{(k)}_\lambda:=\left(\begin{smallmatrix}\sqrt{a_{\lambda k}}&0\\0&\sqrt{c_{\lambda k}}\end{smallmatrix}\right)$.  It is straightforward to see that the operators $\{\hat{\Pi}_{\lambda}:=\bigotimes_{k=1}^N\hat{M}^{(k)}_\lambda\}_{\lambda=1...t}$ correspond to an incomplete measurement that achieves transformation (\hyperlink{star}{$\star$}) with the same probabilities as the $\{\Pi_\lambda\}_{\lambda=1...t}$.  From Eq. \eqref{Eq:seppsimp}, the collection of separable operators $\big\{\bigotimes_{k=1}^N\left(\begin{smallmatrix}0&0\\0&|b_{\lambda k}|\end{smallmatrix}\right)\big\}_{\lambda=1...t}$ can be combined with $\{\hat{\Pi}_{\lambda}\}_{\lambda=1...t}$ to form a set which corresponds to a complete measurement.
\end{proof}

One immediate consequence of this lemma is that for any incomplete separable transformation of the form (\hyperlink{star}{$\star$}) with $\sum_\lambda\Pi_\lambda^\dagger\Pi_\lambda<\mathbb{I}$, we can always assume that $\mathbb{I}-\sum_\lambda\Pi_\lambda^\dagger\Pi_\lambda$ has a diagonal representation and is therefore separable.  As a result, when $\ket{\varphi}_{1...N}$ is a W-class state, it is sufficient to consider the feasible probabilities of transformation (\hyperlink{star}{$\star$}) under incomplete separable transformations.  

Now for measurement $\{\Pi_\lambda:=M^{(1)}_\lambda\otimes...\otimes M^{(N)}_\lambda\}_{\lambda=1...t}$, if we let $S_{ij}$ denote the set of all outcomes $\lambda$ such that $\Pi_\lambda\ket{\varphi}_{1...N}\propto\ket{\Psi^{(ij)}}$, we can form a Choi matrix $\Omega_{ij}$ for each edge $(i,j)\in E$ of the graph $\mathcal{G}$ \cite{Jamiolkowski-1972a}:
\begin{equation}
\Omega_{ij}=\sum_{\lambda\in S_{ij}}\Pi_\lambda\otimes \mathbb{I}\left(\bigotimes_{i=1}^N\Phi^{(ii')}\right)(\Pi^\dagger_\lambda)\otimes\mathbb{I}.
\end{equation}
Here, $\Pi_\lambda$ acts on systems $1,2,...,N$ while $\mathbb{I}$ is the identity acting on their copies $1',2',...,N'$.  By Lemma \ref{Lem:SEPsimp}, the $\Pi_\lambda$ can be taken as diagonal matrices so that $\Omega_{ij}$ has support only on the span of $\{\ket{i_1i_1}_{11'}\ket{i_2i_2}_{22'}...\ket{i_Ni_N}_{NN'}\}_{i_1,i_2,...i_N\in\{0,1\}}$. Furthermore, since all parties besides $i$ and $j$ hold pure states in the end, $M^{(k)}_\lambda$ must be a rank one matrix for $k\not=i,j$ and $\lambda\in S_{ij}$.  Thus, up to local unitaries and a permutation of spaces, $\Omega_{ij}$ has the form 
\[\Omega^{(ij)}=\chi^{(ii'jj')}\otimes\op{0}{0}^{(\overline{ii'jj'})}\]
where $\chi^{(ii'jj')}$ is effectively a separable $2\otimes 2$ density matrix having support on $\{\ket{mm}_{ii'}\ket{nn}_{jj'}\}_{m,n\in\{0,1\}}$; equivalently, $\chi^{(ii'jj')}$ has a positive partial transpose \cite{Horodecki-1996a}.
In terms of the Choi matrix, the condition of obtaining $\ket{\Phi^{(ij)}}$ with probability $p_{ij}$ is given by 
\begin{equation}
\label{Eq:Choitrans}
tr_{1'...N'}(\Omega_{ij}\varphi^{(1'...N')})=p_{ij}\Phi^{(ij)}\otimes\op{0}{0}^{(\overline{ij})}.
\end{equation}
Here we use the fact that $\varphi^{(1'...N')}$ is taken to have only real components.  Finally, the constraint that $\sum_{(i,j)\in E}\sum_{\lambda\in S_{ij}}\Pi^\dagger_\lambda\Pi_\lambda\leq\mathbb{I}$ is captured by
\begin{equation}
\label{Eq:incompleteChoi}
\sum_{(i,j)\in E}tr_{1...N}(\Omega_{ij})\leq\mathbb{I}.
\end{equation}
This construction is completely reversible such that given matrices satisfying the above conditions, we can always construct a separable measurement facilitating transformation (\hyperlink{star}{$\star$}) \cite{Cirac-2001a}.  Thus the necessary and sufficient conditions for a feasible separable map are $4\times 4$ complex matrices $\chi^{(ii'jj')}$ for all $(i,j)\in E$ which satisfy
\begin{align}
\label{Eq:ChoiPPT}
\chi^{(ii'jj')}&\geq 0\notag\\
[\chi^{(ii'jj')}]^{\Gamma_{i'j'}}&\geq 0,
\end{align}
as well as Equations \eqref{Eq:Choitrans} and \eqref{Eq:incompleteChoi}.  This is a semi-definite feasibility problem which can be efficiently solved using a variety of numerical tools \cite{Vandenberghe-1994a}.  Furthermore, duality theory can be used to analytically prove instances of infeasibility.  We perform such an analysis in Appendix \ref{Apx:SDP} for the initial state $\ket{\varphi_{1,...N}}=\ket{W_N}$.  The result is given by the following theorem, which also provides an LOCC upper bound.

\begin{theorem}
\label{Thm:Septhm1}
For $\ket{\varphi_{1...N}}=\ket{W_N}$, transformation {\upshape (\hyperlink{star}{$\star$})} with graph representation $\mathcal{G}$ is possible by separable operations if and only if
\begin{align}
\label{Eq:septhm}
\frac{N^2}{4}\sum_{(i,j)\in E}p_{ij}^2&\leq 1, & \frac{N}{2}\sum_{(i,j)\in E_k}p_{ij}&\leq 1,&&1\leq k\leq N.
\end{align}
\end{theorem}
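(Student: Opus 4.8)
The plan is to reduce the theorem to the semidefinite feasibility problem already isolated above: for the initial state $\ket{W_N}$, transformation (\hyperlink{star}{$\star$}) with graph $\mathcal{G}$ is realizable by a separable measurement precisely when there exist matrices $\chi^{(ii'jj')}$, one per edge $(i,j)\in E$, obeying \eqref{Eq:ChoiPPT}, \eqref{Eq:Choitrans} and \eqref{Eq:incompleteChoi}. I would then peel these constraints apart and show such a family of matrices exists if and only if the two inequalities in \eqref{Eq:septhm} hold. Producing the $\chi^{(ij)}$ when the inequalities hold is the primal (``if'') half; extracting the inequalities as necessary conditions is the dual (``only if'') half, which can equivalently be packaged as an explicit diagonal dual-feasible operator certifying infeasibility — the route taken in Appendix \ref{Apx:SDP}.

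First I would carry out the structural simplifications. By Lemma \ref{Lem:SEPsimp} each $\chi^{(ij)}$ may be taken supported on $\mathrm{span}\{\ket{mm}_{ii'}\ket{nn}_{jj'}\}_{m,n\in\{0,1\}}$, so it is a $4\times 4$ matrix with rows and columns labelled by $(m,n)$. The $\ket0\bra0^{(\overline{ii'jj'})}$ factor of $\Omega_{ij}$ means that in \eqref{Eq:Choitrans} only the reduced state of $\ket{W_N}$ on the copies $i',j'$ enters, and this (projecting the other copies onto $\ket0$) is $\tfrac1N(\ket{10}+\ket{01})(\bra{10}+\bra{01})_{i'j'}$. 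Tracing $\chi^{(ij)}$ against it isolates the ``single-excitation'' $2\times 2$ block of $\chi^{(ij)}$ (the one labelled by $(1,0)$ and $(0,1)$), and equating the result to $p_{ij}\Phi^{(ij)}$ — with $\ket{\Phi^{(ij)}}=\tfrac1{\sqrt2}(\ket{10}+\ket{01})$ the maximally entangled state the W-protocol outputs, any other representative differing only by a free local unitary — freezes that block to $\tfrac{Np_{ij}}{2}\left(\begin{smallmatrix}1&1\\1&1\end{smallmatrix}\right)$; in particular $\chi^{(ij)}_{(1,0),(1,0)}=\chi^{(ij)}_{(0,1),(0,1)}=\chi^{(ij)}_{(1,0),(0,1)}=\tfrac{Np_{ij}}{2}$. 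A short computation then gives $\mathrm{tr}_{1\dots N}(\Omega_{ij})=\bigl(\sum_{m,n}\chi^{(ij)}_{(m,n),(m,n)}\ket{mn}\bra{mn}_{i'j'}\bigr)\otimes\ket0\bra0^{(\overline{i'j'})}$, which is diagonal in the computational basis of $1'\dots N'$; hence \eqref{Eq:incompleteChoi} splits into one scalar inequality per basis string: the all-zero string gives $\sum_{(i,j)\in E}\chi^{(ij)}_{(0,0),(0,0)}\le1$; the weight-one string $\ket{e_k}$ receives contributions only from edges incident to $k$ and gives $\tfrac N2\sum_{(i,j)\in E_k}p_{ij}\le1$; the weight-two string supported on $\{i,j\}$ receives a contribution only from edge $(i,j)$ and gives $\chi^{(ij)}_{(1,1),(1,1)}\le1$; all other strings are slack. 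The weight-one inequalities are exactly the second family in \eqref{Eq:septhm}.

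For the first family I would use the positivity of the partial transpose in \eqref{Eq:ChoiPPT}. Since $\chi^{(ij)}$ is a separable $2\otimes2$ matrix and its off-diagonal entry $\chi^{(ij)}_{(1,0),(0,1)}$ has been frozen to $\tfrac{Np_{ij}}{2}$, the $2\times 2$ principal submatrix of $[\chi^{(ij)}]^{\Gamma}$ spanned by $\ket{00}$ and $\ket{11}$ is $\left(\begin{smallmatrix}\chi^{(ij)}_{(0,0),(0,0)} & Np_{ij}/2\\ Np_{ij}/2 & \chi^{(ij)}_{(1,1),(1,1)}\end{smallmatrix}\right)$ (the ``$X$-state'' PPT criterion), so its positivity forces $\chi^{(ij)}_{(0,0),(0,0)}\,\chi^{(ij)}_{(1,1),(1,1)}\ge \tfrac{N^2p_{ij}^2}{4}$. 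Combined with $\chi^{(ij)}_{(1,1),(1,1)}\le1$ this gives $\chi^{(ij)}_{(0,0),(0,0)}\ge\tfrac{N^2p_{ij}^2}{4}$, and summing over edges against $\sum_{(i,j)\in E}\chi^{(ij)}_{(0,0),(0,0)}\le1$ yields $\tfrac{N^2}{4}\sum_{(i,j)\in E}p_{ij}^2\le1$. This settles ``only if''.

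For ``if'', assume \eqref{Eq:septhm} and, for each edge, take $\chi^{(ij)}$ block-diagonal with single-excitation block $\tfrac{Np_{ij}}{2}\left(\begin{smallmatrix}1&1\\1&1\end{smallmatrix}\right)$, with $\chi^{(ij)}_{(1,1),(1,1)}=1$ and $\chi^{(ij)}_{(0,0),(0,0)}=\tfrac{N^2p_{ij}^2}{4}$, and every remaining entry zero. Then each $2\times 2$ block of $\chi^{(ij)}$ and of $[\chi^{(ij)}]^{\Gamma}$ is positive semidefinite of rank at most one (the $\{00,11\}$-block of the transpose being $\left(\begin{smallmatrix}N^2p_{ij}^2/4 & Np_{ij}/2\\ Np_{ij}/2 & 1\end{smallmatrix}\right)$, of determinant zero), so \eqref{Eq:ChoiPPT} holds; \eqref{Eq:Choitrans} holds because the single-excitation block was built to match $p_{ij}\Phi^{(ij)}$; and \eqref{Eq:incompleteChoi} holds because the three diagonal families collapse to the two hypotheses together with $1\le1$. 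Reversibility of the Choi correspondence then yields an honest separable measurement. I expect the bookkeeping to be the only real obstacle: pinning down which block of $\chi^{(ij)}$ is frozen (this hinges on the specific maximally entangled representative emitted and on the reduced input state of $\ket{W_N}$), tracking how $\sum_{(i,j)}\mathrm{tr}_{1\dots N}(\Omega_{ij})\le\mathbb{I}$ distributes over computational-basis strings so that the weight-zero string produces the $\sum p_{ij}^2$ bound and the weight-one strings the $\sum_{E_k}p_{ij}$ bounds, and checking that the necessity step is unaffected by the off-diagonal entries of $\chi^{(ij)}$ not assumed to vanish — which it is, since the decisive inequality comes entirely from a $2\times2$ submatrix whose off-diagonal entry is the frozen value $Np_{ij}/2$.
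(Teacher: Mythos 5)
Your proposal is correct, and its front end — Lemma \ref{Lem:SEPsimp}, the Choi-matrix formulation, freezing the single-excitation block of $\chi^{(ij)}$ to $\tfrac{Np_{ij}}{2}\left(\begin{smallmatrix}1&1\\1&1\end{smallmatrix}\right)$ via \eqref{Eq:Choitrans}, and splitting \eqref{Eq:incompleteChoi} into scalar constraints indexed by computational-basis strings (all-zero, weight-one, weight-two) — is exactly the reduction the paper performs, implicitly, when it encodes the problem in the block matrices $G_0,G_m^{(ij)}$ and the side constraints \eqref{Eq:probconst} of Appendix \ref{Apx:SDP}. Where you diverge is the finishing step. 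The paper stays inside the semidefinite-programming formalism and proves necessity of the quadratic constraint by exhibiting an explicit dual-feasible certificate $Z=\sum_{(i,j)}Z^{(ij)}$ with $tr(ZG_0)<0$ whenever $\tfrac{N^2}{4}\sum p_{ij}^2>1$; you instead unpack that certificate into an elementary argument: positivity of the $\{00,11\}$ principal submatrix of $[\chi^{(ij)}]^{\Gamma}$, whose off-diagonal is the frozen $Np_{ij}/2$, gives $\chi^{(ij)}_{(0,0),(0,0)}\chi^{(ij)}_{(1,1),(1,1)}\geq N^2p_{ij}^2/4$, which together with the weight-two bound $\chi^{(ij)}_{(1,1),(1,1)}\leq 1$ and the all-zero-string bound yields the first inequality of \eqref{Eq:septhm} (indeed, the paper's $2\times 2$ block $\left(\begin{smallmatrix}1&-Np_{ij}/2\\-Np_{ij}/2&N^2p_{ij}^2/4\end{smallmatrix}\right)$ inside $Z^{(ij)}$ is precisely a rank-one witness for this same submatrix inequality, so your argument is the ``dual certificate made by hand''). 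For sufficiency you explicitly write down a feasible $\chi^{(ij)}$ (rank-one outer block of determinant zero after partial transposition) and verify \eqref{Eq:ChoiPPT}--\eqref{Eq:incompleteChoi}; the paper does not display a primal feasible point, leaving the ``if'' direction to the reader of its SDP encoding, so on this half your write-up is actually more complete. The trade-off: the paper's route generalizes mechanically to numerical feasibility tests for arbitrary initial W-class states with $x_0=0$, while yours is self-contained, avoids duality theory, and makes transparent why exactly the two families of inequalities in \eqref{Eq:septhm} appear. Your handling of the residual freedom in the target state (a phase on the single-excitation off-diagonal, removable by a local unitary and irrelevant to the determinant bound) is adequate and matches the paper's implicit convention of real components.
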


\begin{remark}
\noindent In practice, it may be helpful to use the inequality $\sum_{i=1}^nx^2_i\geq\frac{1}{n}(\sum_{i=1}^tx_i)^2$ so that the first constraint in Eq. \eqref{Eq:septhm} becomes
\begin{equation}
\label{Eq:sepcons2}
\frac{N^2}{4|E|}\left(\sum_{(i,j)\in E}p_{ij}\right)^2\leq 1.
\end{equation}
\end{remark}

\section{Entanglement Monotones}
\label{Sect:monotone}

In this section, we introduce new entanglement monotones on the $N$-qubit W-class of states.  An important property of quantum measurements is the universality of weak measurements.  This means that any general measurement can be replaced by a sequence of measurements that obtains the same overall outcomes but only changes the state by an arbitrarily small increment with each individual measurement \cite{Bennett-1999a, Oreshkov-2005a}.  Consequently, to prove LOCC monotonicity of a given function, it is sufficient to prove it non-increasing on average under two-outcome infinitesimal measurements by a single party.  The full generality of this latter consideration was explored in Ref. \cite{Oreshkov-2006a}.  Here, a weak measurement of $\{M_1^{(k)}, M_2^{(k)}\}$ corresponds to $(a_1,c_1,a_2,c_2)$ lying in a small neighborhood of $(1/2,1/2,1/2,1/2)$, and the relatively simple structure of the W-class eases analysis in this infinitesimal setting. 

We define our monotones as follows.  For an $N$-party W-state $(x_1,x_2,...,x_N)$, set $\{n_1,n_2,...,n_N\}=\{1,2,...,N\}$ such that $x_{n_1}\geq x_{n_2}\geq ...\geq x_{n_N}$ and consider the continuous functions:
\begin{align}
\eta(\vec{x})&=x_{n_1}-\left(\frac{1}{x_{n_1}}\right)^{N-2}\prod_{i=2}^{N}(x_{n_1}-x_{n_i})\notag\\
\kappa(\vec{x})&=\sum_{i=2}^Nx_{n_i}+\eta(\vec{x}).
\end{align}
\begin{theorem}
\label{Thm:montonemain}
\begin{itemize}
\item[{}]
\item[{\upshape (I)}] $\eta$ is non-increasing on average for any single local measurement in which $n_{1}$ is
the same value for the initial and all possible final states,
\item[{\upshape (II)}] $\kappa$ is an entanglement monotone.  It is strictly decreasing on average for any non-trivial measurement by party $n_{1}$.
\end{itemize}
\end{theorem}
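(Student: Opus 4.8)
The plan is to reduce everything to the action of a single local measurement, as described by Eq.~\eqref{starstar}, and --- for the global LOCC claim in (II) --- to invoke the universality of weak measurements so that it suffices to verify average non-increase under one weak two-outcome measurement by a single party. The first move I would make is to rewrite the monotones in a form that exposes the relevant structure: with $m := n_1(\vec{x})$ the index of the largest component and $y_j := x_j/x_{n_1}$ for $j\neq m$, one has $\eta(\vec{x}) = x_{n_1}[1 - \prod_{j\neq m}(1-y_j)]$ and hence $\kappa(\vec{x}) = \sum_{j\neq m}x_j + \eta(\vec{x}) = (1-x_0) - T(\vec{x})$, where $T(\vec{x}) := x_{n_1}\prod_{j\neq m}(1-y_j)\geq 0$ vanishes exactly when $x_{n_1}=x_{n_2}$. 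Two immediate consequences are $\eta\geq 0$ and $\kappa\leq 1-x_0$. I would also introduce the single-variable polynomial $g(r) := x_{n_1}[1-\prod_{j\neq m}(1-ry_j)]$, so that $g(1)=\eta(\vec{x})$ and $g(0)=0$, and observe that $g''(r) = -2x_{n_1}\sum_{j<\ell}y_jy_\ell\prod_{s\neq j,\ell}(1-ry_s)$ (all indices $\neq m$) is $\leq 0$ on the interval $r\in[0,\,x_{n_1}/x_{n_2}]$, i.e.\ $g$ is concave there, and strictly so for $N\geq 3$ when all components of $\vec{x}$ are positive.

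For part (I) I would split on the identity of the measuring party $k$, using throughout the hypothesis that $n_1$ is preserved. If $k\neq m$: writing $W := \prod_{j\neq m,k}(1-y_j)\geq 0$, Eq.~\eqref{starstar} gives $p_\mu\,\eta(\vec{x}_\mu) = a_\mu x_{n_1}(1-W) + c_\mu x_k W$ while $\eta(\vec{x}) = x_{n_1}(1-W)+x_k W$; summing and using $\sum_\mu a_\mu=1$, $\sum_\mu c_\mu\leq 1$ yields $\sum_\mu p_\mu\eta(\vec{x}_\mu) = x_{n_1}(1-W)+x_k W\sum_\mu c_\mu \leq \eta(\vec{x})$. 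If $k=m$: Eq.~\eqref{starstar} gives $\eta(\vec{x}_\mu) = (c_\mu/p_\mu)\,g(r_\mu)$ with $r_\mu := a_\mu/c_\mu$, the preservation of $n_1$ is exactly $r_\mu\in[0,\,x_{n_1}/x_{n_2}]$, and $\sum_\mu a_\mu=1$ reads $\sum_\mu c_\mu r_\mu = 1$; appending (when $\sum_\mu c_\mu<1$) a fictitious outcome of weight $1-\sum_\mu c_\mu$ at $r=0$, Jensen's inequality for the concave $g$ gives $\sum_\mu p_\mu\eta(\vec{x}_\mu) = \sum_\mu c_\mu g(r_\mu) \leq g(\sum_\mu c_\mu r_\mu) = \eta(\vec{x})$.

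For part (II) I would fix one weak two-outcome measurement by a party $k$ on a state $\vec{x}$ with positive components. If the two largest components of $\vec{x}$ coincide, then $\kappa(\vec{x}) = 1-x_0$, so from $\kappa\leq 1-x_0$ and the Kinta\c{s}--Turgut monotonicity of $-x_0$, $\sum_\lambda p_\lambda\kappa(\vec{x}_\lambda) \leq \sum_\lambda p_\lambda(1-x_{\lambda,0}) \leq 1-x_0 = \kappa(\vec{x})$. If $x_{n_1}>x_{n_2}$, a weak measurement cannot change $n_1$, so part (I) applies: for $k\neq m$ one combines the $\eta$-estimate with $\sum_\lambda p_\lambda\sum_{j\neq m}x_{\lambda,j} = \sum_{j\neq m,k}x_j + x_k(c_1+c_2)$ to get $\sum_\lambda p_\lambda\kappa(\vec{x}_\lambda) = \kappa(\vec{x}) - x_k(1+W)(1-c_1-c_2)\leq\kappa(\vec{x})$, and for $k=m$ one gets $\sum_\lambda p_\lambda\kappa(\vec{x}_\lambda) = \sum_{j\neq m}x_j + \sum_\lambda c_\lambda g(r_\lambda) \leq \sum_{j\neq m}x_j + g(1) = \kappa(\vec{x})$. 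For the strict decrease under a non-trivial measurement by party $n_1$: when $x_{n_1}>x_{n_2}$, strict concavity of $g$ makes the Jensen step an equality only if all abscissae (including the fictitious one) agree, which forces the measurement to be trivial; when $x_{n_1}=x_{n_2}$, a non-trivial measurement by $n_1$ either produces an outcome with $c_\lambda>a_\lambda$, for which $n_1$ becomes the strict maximum and $T(\vec{x}_\lambda)>0$, or else is non-diagonal, in which case $\sum_\lambda p_\lambda x_{\lambda,0}>x_0$ strictly --- either way $\sum_\lambda p_\lambda\kappa(\vec{x}_\lambda)<\kappa(\vec{x})$.

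The routine calculations here are the two applications of Eq.~\eqref{starstar} behind the displayed identities, and I do not expect these to be the obstacle. The subtle point is the bookkeeping of which party currently holds the largest component, since every clean estimate presupposes that $n_1$ is unchanged and this can fail exactly when the top two components are tied. The identity $\kappa = (1-x_0)-T$ is the device that resolves this: in the tied case $\kappa$ collapses onto the established monotone $1-x_0$, so the Kinta\c{s}--Turgut bound covers precisely the regime where the explicit computation breaks down. A second item needing care is that $g$ is concave only on $[0,\,x_{n_1}/x_{n_2}]$ rather than on all of $\mathbb{R}$, together with checking that the fictitious $r=0$ outcome is legitimate --- which it is, because a local measurement need not be trace-preserving in its lower-right block, so $\sum_\mu c_\mu$ can be strictly less than $1$.
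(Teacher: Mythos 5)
Your proposal is correct, and its skeleton matches the paper's: reduce to the single-measurement update rule \eqref{starstar}, invoke weak-measurement universality for the global claim in (II), and in the regime where the two largest components tie, collapse $\kappa$ onto $1-x_0$ and finish with the Kinta\c{s}--Turgut relations \eqref{Eq:KTmono}. Where you genuinely diverge is the core estimate for a measurement by party $n_1$: the paper first argues (via a derivative computation at the weak-measurement point) that the worst case is $c_1+c_2=1$ and then does a second-order Taylor expansion of $\eta-\overline{\eta}$ about $(a,c)=(1/2,1/2)$, obtaining the $(a-c)^2$ term that gives non-increase and strictness for \emph{weak} measurements only; you instead prove a global Jensen inequality for $g(r)=x_{n_1}[1-\prod_{j\neq m}(1-r\,x_j/x_{n_1})]$, concave on $[0,x_{n_1}/x_{n_2}]$, with the fictitious $r=0$ outcome absorbing $1-\sum_\mu c_\mu$. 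This buys part (I) directly for arbitrary finite measurements preserving $n_1$ (no weak-measurement limit or boundary maximization needed), and it yields a cleaner strictness analysis: Jensen equality forces $a_\lambda=c_\lambda$ and $c_1+c_2=1$, hence triviality, and your explicit treatment of the tied case (the $c_\lambda>a_\lambda$ outcome giving $T(\vec{x}_\lambda)>0$, versus the non-diagonal case giving a strict average increase of $x_0$) covers a situation the paper's proof leaves implicit. One caveat, which is a limitation shared with the paper rather than a gap relative to it: for a single \emph{strong} non-trivial measurement by $n_1$ that reverses the ordering in some outcome when $x_{n_1}>x_{n_2}$, the Jensen step does not apply verbatim, and strict decrease is then obtained only through the decomposition into weak steps (the first slightly non-trivial step already decreases $\kappa$, and all later steps are non-increasing); it is worth saying this explicitly if you want the strictness statement in (II) at full strength.
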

The three qubit form of this theorem has been proven in Ref. \cite{Chitambar-2012a}.  Here, in the general case, our proof technique will be very similar.
\begin{proof}
(I)
We consider case-by-case measurements of each party under the conditions of (I).  The function $\eta$ transforms as $\eta\to\eta_\lambda$ for $\lambda=1,2$, and we are interested in the average change: $\overline{\eta_\lambda}=p_1\eta_1+p_2\eta_2$ under infinitesimal measurements.  First suppose that party $n_1$ measures.  According to Eq. \eqref{starstar}, the average change in $\eta$ is
\begin{align}
\label{Eq:1}
\overline{\eta(\vec{x}_\lambda)}&=c_1x_{n_1}\left(1-\prod_{i=2}^N(1-\frac{a_1x_{n_i}}{c_1x_{n_1}})\right)\notag\\
&+c_2x_{n_1}\left(1-\prod_{i=2}^N(1-\frac{a_2x_{n_i}}{c_2x_{n_1}})\right).
\end{align}
We demonstrate that in the weak measurement setting, this quantity is maximized by equality of the upper bound: $c_1+c_2=1$. Indeed, we have
\begin{align}
\label{Eq:2}
\frac{\partial \overline{\eta_\lambda}}{\partial c_\lambda}|_{a_1=a_2=1/2 \atop c_1=c_2=1/2} &=x_{n_1}\bigg\{\left(1-\prod_{i=2}^N(1-\frac{x_{n_i}}{x_{n_1}})\right)\notag\\
&-\sum_{i=2}^N\frac{x_{n_i}}{x_{n_1}}\prod_{j\not=i}^N(1-\frac{x_{n_j}}{x_{n_1}})\bigg\}
\end{align}
for $\lambda=1,2$, and it suffices to show that this expression is strictly positive.  Now if we differentiate Eq. \eqref{Eq:2} with respect to any $x_{n_k}$ we obtain
\begin{align}
\prod_{i\not=k}^N(1-\frac{x_{n_i}}{x_{n_1}})-\prod_{i\not=k}^N(1-\frac{x_{n_i}}{x_{n_1}})+\sum_{i\not=k}\frac{x_{n_i}}{x_{n_1}}\prod_{j\not=i,k}^N(1-\frac{x_{n_j}}{x_{n_1}})\notag\\
=\sum_{i\not=k}\frac{x_{n_i}}{x_{n_1}}\prod_{j\not=i,k}^N(1-\frac{x_{n_i}}{x_{n_1}})\geq 0\qquad (2\leq k\leq N),\notag
\end{align}
and since Eq. \eqref{Eq:2} vanishes when $x_{n_k}=0$ for all $n_k$, it follows that for nonzero values of $x_{n_k}$, Eq. \eqref{Eq:2} is strictly positive.  Thus, the maximal change in $\eta$ occurs when $c_1+c_2=1$.  As we are interested in this upper bound, we will assume the measurement is characterized by $a\equiv a_1$, $1-a=a_2$, $c\equiv c_1$, and $1-c=c_2$.  We then have
\begin{align}
\label{Eq:3}
\eta-\overline{\eta(\vec{x}_\lambda)}&=-x_{n_1}\prod_{i=2}^N\left(1-\frac{x_{n_i}}{x_{n_1}}\right)+cx_{n_1}\prod_{i=2}^N\left(1-\frac{ax_{n_i}}{cx_{n_1}}\right)\notag\\
&+(1-c)x_{n_1}\prod_{i=2}^N\left(1-\frac{(1-a)x_{n_i}}{(1-c)x_{n_1}}\right).
\end{align}
Expanding this to second order about the point $(a,c)=(1/2,1/2)$ yields
\begin{align}
\eta-\overline{\eta(\vec{x}_\lambda)}&\approx 4(a-c)^2\sum_{i,j}\frac{x_{n_i}x_{n_j}}{x_{n_1}}\prod_{l\not=i,j}^N\left(1-\tfrac{x_{n_l}}{x_{n_1}}\right)\geq 0.
\end{align}
And this expression will be positive whenever $a\not=c$, which is whenever party $n_1$ performs a non-trivial measurement.  In the case in which party $n_i$ performs a measurement for some $i>1$, $\eta$ changes as  
\begin{align}
\overline{\eta(\vec{x}_\lambda)}=x_{n_1}&-(a_1x_{n_1}-c_1x_{n_i})\prod_{j\not=i}^N(1-\frac{x_{n_j}}{x_{n_1}})\notag\\&-(a_2x_{n_1}-c_2x_{n_i})\prod_{j\not=i}(1-\frac{x_{n_j}}{x_{n_1}})\leq \eta(\vec{x}).
\end{align}

(II)  We can always decompose a general transformation into a sequence of weak measurements for which each measurement either satisfies the conditions of (I), or its pre-measurement state $\vec{y}$ satisfies $y_{n_1}=y_{n_2}$.  In the first case, $\kappa$ is monotonic by part (I) and the fact that $\sum_{i=2}^Nx_{n_i}$ is non-increasing on average by the K-T monotones.  In the second case, we have $\kappa(\vec{y})=1-y_0$.  Since $1-y_{\lambda,0}$ is an upper bound on $\kappa(\vec{y}_\lambda)$ for each of the post-measurement states $\vec{y}_\lambda$, and $1-y_0$ is non-increasing on average by the K-T monotones, it follows that $\kappa(\vec{y})\geq\sum_\lambda p_\lambda\kappa(\vec{y}_\lambda)$.  Thus, $\kappa$ is an entanglement monotone in general.
\end{proof}

Theorem \ref{Thm:montonemain} also applies to any fixed collection of subsystems.  Indeed for $N$-qubit systems, let $S$ denote some subset of parties, and consider the unnormalized state $\vec{s}$ which has $|S|$ components, each belonging to a different party in $S$.  Then Theorem \ref{Thm:montonemain} also holds for the functions $\eta(\vec{s})$ and $\kappa(\vec{s})$.  The proof of this is exactly the same as above with the added note that whenever a measurement is performed by a party not in $S$, $\eta(\vec{s})$ and $\kappa(\vec{s})$ remain invariant on average, which follows from Eq. \eqref{starstar}.  

For example, in a 4-party system, let $S$ be parties 1, 2, and 3.  Now for any four-qubit state $\vec{x}$, take $\{x_{max},x_{mid},x_{min}\}=\{x_1,x_2,x_3\}$ such that $x_{max}\geq x_{mid}\geq x_{min}$.  Then, the function 
\begin{equation}
2x_{mid}+2x_{min}-\frac{x_{mid}x_{min}}{x_{max}}
\end{equation}
is an entanglement monotone.

The condition in part (I) of Theorem \ref{Thm:montonemain} can be extended beyond single measurements.  
\begin{corollary}
\label{Cor:eta}
Suppose the transformation $\vec{x}\to \{p_i, \vec{y_i}\}$ is possible by LOCC where $n_1(\vec{x})=n_1(\vec{y}_i)$ for all $i$.  Then $\eta(\vec{x})\geq \sum_ip_i\eta(\vec{y}_i).$
\end{corollary}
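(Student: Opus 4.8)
The plan is to reduce the corollary to the single‑measurement statement of part~(I) of Theorem~\ref{Thm:montonemain} by decomposing the transformation into weak measurements. Set $k:=n_1(\vec{x})$, so by hypothesis $n_1(\vec{y}_i)=k$ for all $i$. It is convenient first to replace $\eta$ by the function that freezes the largest index at $k$, namely $\tilde\eta_k(\vec{z}):=z_k-z_k^{-(N-2)}\prod_{j\neq k}(z_k-z_j)$. This is a fixed continuous function (on $\{z_k>0\}$) that agrees with $\eta$ on every state whose largest component is $z_k$; in particular $\eta(\vec{x})=\tilde\eta_k(\vec{x})$ and $\eta(\vec{y}_i)=\tilde\eta_k(\vec{y}_i)$, so it suffices to prove $\tilde\eta_k(\vec{x})\ge\sum_i p_i\,\tilde\eta_k(\vec{y}_i)$.

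Next I would invoke the universality of weak measurements to realize the LOCC transformation as a branching sequence of two‑outcome infinitesimal local measurements, giving a tree rooted at $\vec{x}$ with leaves $\{\vec{y}_i\}$; it then suffices (by telescoping) to show that $\tilde\eta_k$ is non‑increasing on average across each individual weak step. If the pre‑measurement state $\vec{z}$ of a step has $z_k$ strictly larger than every other component, then — because the step is infinitesimal — both outcomes also have $z_k$ strictly largest, so the step meets the hypothesis of part~(I) of Theorem~\ref{Thm:montonemain} and $\tilde\eta_k=\eta$ is non‑increasing there. Indeed, inspecting the case analysis in that proof, the only property actually used is that the partial products $\prod_{j\neq k}(1-z_j/z_k)$ and their sub‑products are nonnegative; this holds whenever $z_k$ is a maximal component of $\vec{z}$, and I would record it as a lemma valid for a measurement by any party, combining — via the component rule~\eqref{starstar} — the constraints $a_1+a_2=1$, $c_1+c_2\le 1$ with the K‑T monotones~\eqref{Eq:KTmono} controlling the term linear in the measured component.

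The difficulty is concentrated in the remaining steps: those whose pre‑measurement state has a tie for the maximum, and — more seriously — those on an ``excursion'' along which some component other than $z_k$ has temporarily become the largest. This is exactly where the global hypothesis $n_1(\vec{y}_i)=k$ must be used, because purely locally $\eta$ can genuinely increase on average across a weak step once $z_k$ ceases to be maximal (a direct three‑qubit computation near the uniform state $(1/3,1/3,1/3)$ using~\eqref{starstar} exhibits this). The route I would take is to show that, since every leaf has $n_1=k$, the protocol can be re‑implemented so that $z_k$ is a maximal component at every node — equivalently, so that the product $\prod_{j\neq k}(z_k-z_j)$ defining $\tilde\eta_k$ never becomes negative — the point being that a competitor can overtake $z_k$ only by first tying with it, and the weak measurement at such a tie can be rerouted to keep $k$ among the maximizers. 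With that invariant in force the lemma of the previous paragraph applies at every step and telescoping yields the claim. I expect the existence of this ``$z_k$‑dominant'' implementation of an $n_1{=}k$‑preserving protocol to be the main obstacle; it should follow from the same weak‑measurement rerouting that reduces general LOCC to two‑outcome measurements (cf.~\cite{Anderson-2008a}), carried out carefully at the tie loci, and that is where I would concentrate the effort.
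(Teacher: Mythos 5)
Your reduction to weak measurements and your treatment of the steps where $z_k$ is strictly maximal are sound and coincide with the paper's use of part (I) of Theorem \ref{Thm:montonemain}. The gap is exactly where you yourself locate it: the claim that an LOCC protocol whose leaves all satisfy $n_1(\vec{y}_i)=k$ can be ``re-implemented'' so that $z_k$ remains a maximal component at every intermediate node. You give no argument for this, and it does not follow from the reduction to two-outcome weak measurements of \cite{Anderson-2008a}: nothing in the hypothesis forbids a legitimate protocol from making excursions in which some other component temporarily dominates (by Eq.~\eqref{starstar} a party $j\neq k$ can raise $x_j$ above $x_k$ on one outcome and later lower it again, with all leaves still ending up $k$-dominant), and whether the same output ensemble $\{p_i,\vec{y}_i\}$ can always be produced by a protocol that never leaves the $k$-dominant region is a nontrivial structural statement about LOCC that would itself need a proof. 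As written, the argument is incomplete precisely at its crux.

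The paper closes this hole without modifying the protocol at all, and this is the substantive difference. Partition the weak-measurement trajectory into sections according to whether $x_k$ is the largest component. By continuity of weak measurements, every passage between the two kinds of section goes through a border state $\vec{s}$ in which $s_k$ ties the largest of the other components; there the product $\prod_{j\neq k}(s_k-s_j)$ vanishes, so $\eta(\vec{s})=s_k$. Inside a section where $k$ is not dominant one does not attempt to control $\eta$ step by step (indeed it can increase there, as you note); instead one uses that the single component $x_k$ is itself a K-T monotone \eqref{Eq:KTmono}, so the average of $\eta=s_k$ over the exit border states of such a section is at most its value at the entry border state. Inside the $k$-dominant sections, part (I) of Theorem \ref{Thm:montonemain} applies, and since every leaf has $n_1(\vec{y}_i)=k$ the leaves lie in the closure of the dominant region. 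Telescoping the two kinds of estimate yields $\eta(\vec{x})\geq\sum_i p_i\eta(\vec{y}_i)$. If you replace your rerouting step by this border/K-T argument, your proof goes through; with the rerouting claim left unproved, it does not.
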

\begin{proof}
We can partition any transformation into sections where $n_1(\vec{x})$ is the largest component and where it is not.  By weak measurement theory, we can assume that when passing from one section to the other, we always first obtain a state $\vec{s}$ on the border such that $\eta(\vec{s})=s_{n_1(\vec{x})}$.  Therefore, since the $n_1(\vec{x})$ component is always monotonic by the K-T monotones \eqref{Eq:KTmono}, we have that $\eta$ will not have increased on average within any region for which $n_1(\vec{x})$ is not the largest component.  For sections when $n_1(\vec{x})$ is the largest, we know that $\eta$ is monotonic by part (I) of the previous theorem.
\end{proof}

\subsubsection*{\upshape \bf Interpretation of Monotones}

A natural question is whether the functions $\eta$ and $\kappa$ possess any physical interpretation.  Here we show that for states $\vec{x}$ having $x_0=0$, $2\eta(\vec{x})$ gives the optimal probability for transformation (\hyperlink{star}{$\star$}) when the configuration graph $\mathcal{G}$ consists of all edges connected to node $v_{n_1(\vec{x})}$.  We will refer to this as a ``combing transformation'' since it represents a single-copy version of the entanglement combing procedure described in Ref. \cite{Yang-2009a}.  On the other hand, $\kappa(\vec{x})$ gives the optimal probability when $\mathcal{G}$ is complete, i.e. each vertex is connected to every other one (see Fig. \ref{etakappafig}).  The following theorem gives a precise statement of this result.

\begin{theorem}
\label{Thm:RDprobs}
For an $N$-party W-state $\vec{x}=(x_1,x_2,...,x_N)$, let $P_{tot}$ be the optimal total probability of obtaining an EPR pair by LOCC, and $P_k$ the optimal total probability of party $k$ becoming EPR entangled.  Then
\begin{itemize}
\item[\upshape(I)] $P_{tot}<\kappa(\vec{x})$, and
\item[\upshape(II)] $P_k\leq\begin{cases}2 x_{k} \;\;\text{if}\;\; x_k<x_l\;\; \text{for some}\;\; $l$\\2\eta(\vec{x}) \;\;\text{if}\;\; x_k\geq x_l\;\; \text{for all}\;\; $l$.
\end{cases}$
\end{itemize}
When $x_0=0$, the upper bound in {\upshape (I)} can be approached arbitrarily close while in {\upshape (II)} it can be achieved exactly.
\end{theorem}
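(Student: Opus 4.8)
The plan is to obtain both upper bounds from the entanglement monotones already in hand --- the K--T monotones together with Theorem~\ref{Thm:montonemain} and Corollary~\ref{Cor:eta} --- and then to supply matching LOCC protocols in the case $x_0=0$. First one fixes the relevant configuration graphs: for $P_{tot}$ take $\mathcal{G}$ complete, and for $P_k$ take $\mathcal{G}$ to be the star centred at $v_k$, since edges not incident to $v_k$ can only diminish the probability that party $k$ ends up entangled. The numerical inputs needed are that, regarded as a W-vector, the two-qubit EPR state $\ket{\Phi^{(ij)}}$ has exactly two nonzero components, both equal to $1/2$; hence $\kappa$ evaluates to $1$ and $\eta$ to $1/2$ on it, the latter because the product $\prod_{i\ge 2}(x_{n_1}-x_{n_i})$ in the definition of $\eta$ contains the vanishing factor $x_{n_1}-x_{n_2}$. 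One also records the elementary facts $\kappa(\vec{x})\ge 0$ and $\eta(\vec{x})\le x_{n_1}$ on the whole W-class.

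For the bound in (I), apply $\kappa$-monotonicity (Theorem~\ref{Thm:montonemain}(II)) to any LOCC protocol implementing (\hyperlink{star}{$\star$}) with complete graph: writing the outcome ensemble as $\{p_\omega,\vec{x}_\omega\}$, every successful outcome is an EPR pair with $\kappa=1$ and every remaining outcome has $\kappa\ge 0$, so $\kappa(\vec{x})\ge\sum_\omega p_\omega\,\kappa(\vec{x}_\omega)\ge P_{tot}$. The strict inequality is the delicate point: it amounts to showing no protocol saturates the bound, and I would argue this by the same route as in the three-qubit treatment of Ref.~\cite{Chitambar-2012a}. Saturation would force the monotone inequality to be tight at every infinitesimal step of a weak-measurement decomposition; by Theorem~\ref{Thm:montonemain}(II) the current largest-component party could then never measure non-trivially, and a case analysis of the two regimes appearing in the proof of Theorem~\ref{Thm:montonemain} (namely $x_{n_1}>x_{n_2}$ and $x_{n_1}=x_{n_2}$) shows the surviving protocols make no genuinely multipartite progress: effectively they reduce to a fixed-pair distillation $\vec{x}\to\ket{\Phi^{(n_1n_2)}}$, whose probability the K--T monotones bound by $2x_{n_1}\le 1$, with equality only when $\vec{x}$ is already an EPR pair. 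I expect this strictness argument --- pinning down the ``supremum but not maximum'' behaviour familiar from the Fortescue--Lo protocol --- to be the main obstacle.

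For (II), if $x_k<x_l$ for some $l$ it suffices to use the K--T monotone $x_k$ alone: in the star transformation every successful outcome $\ket{\Phi^{(kj)}}$ contributes $1/2$ to the $k$-component and every other outcome contributes $\ge 0$, so $x_k\ge\tfrac12\sum_j p_{kj}=\tfrac12 P_k$, i.e.\ $P_k\le 2x_k$. If instead $x_k\ge x_l$ for all $l$, so $k=n_1(\vec{x})$, I would invoke Corollary~\ref{Cor:eta}: each successful outcome $\ket{\Phi^{(kj)}}$ has $\eta=1/2$ with $k$ tied for the largest component, which gives $\eta(\vec{x})\ge\tfrac12 P_k$. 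The only wrinkle is that some branches may temporarily have party $k$'s component non-maximal; these are absorbed exactly as in the proof of Corollary~\ref{Cor:eta}, by partitioning the protocol into regions and carrying the bound through the non-maximal regions via $\eta\le x_{n_1(\vec{x})}$ and K--T monotonicity of $x_{n_1(\vec{x})}$, using that the final outcome $\ket{\Phi^{(kj)}}$ is reached with $k$ once more tied for largest.

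Finally, for the achievability claims when $x_0=0$ I would construct explicit ``equal-or-vanish'' protocols and verify optimality by induction on $N$. For (II): the parties $n_2,\dots,n_N$ perform, in turn, a two-outcome diagonal measurement whose branches either drive that party's component to $0$ (``vanish'', which removes the party and leaves a combing problem on fewer parties, bottoming out at a two-party state distilled by an optimal bipartite protocol) or scale it up to equal the current largest component (``equalize''), so that the fully equalized branch is a scaled copy of $\ket{W_m}$ on a subset containing $n_1$, which is then combed toward $n_1$; summing the branch probabilities recursively yields total success probability exactly $2\eta(\vec{x})$. For (I) the same primitive is iterated symmetrically over all parties in the manner of the Fortescue--Lo protocol, producing a family of protocols whose total success probabilities increase to $\kappa(\vec{x})$ without attaining it. The remaining substantive work is the inductive probability bookkeeping, in particular checking that the ``vanish'' branches reassemble to give precisely $2\eta(\vec{x})$ rather than merely a lower bound.
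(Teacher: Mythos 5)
Your overall route is the same as the paper's: the upper bounds come from Theorem~\ref{Thm:montonemain}, Corollary~\ref{Cor:eta} and the K--T monotones applied to the outcome ensemble (using $\kappa(\Phi^{(ij)})=1$, $\eta(\Phi^{(ij)})=1/2$), and tightness for $x_0=0$ comes from ``equal or vanish'' protocols combined with the Fortescue--Lo protocol and an induction on $N$; your handling of the case where party $k$'s component is temporarily non-maximal is exactly the partition argument used for Corollary~\ref{Cor:eta}, and your leftover ``inductive bookkeeping'' is precisely the computation the paper carries out explicitly (the paper's combing protocol stops at the first ``equal'' outcome and distills a fixed pair, rather than equalizing everyone and recursing, but either variant can be made to work).

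The step that would fail as written is your endgame for the strict inequality in (I). From ``saturation forces the current largest-component party never to measure non-trivially'' you conclude that the protocol ``reduces to a fixed-pair distillation $\vec{x}\to\ket{\Phi^{(n_1n_2)}}$, bounded by $2x_{n_1}\le 1$.'' That reduction is wrong: with party $n_1$ passive, the other parties can still produce EPR pairs between $n_1$ and \emph{any} partner (this is exactly the combing scenario, which achieves $2\eta(\vec{x})$, not a fixed-pair rate), and the bound $2x_{n_1}\le 1$ is in any case not a comparison with $\kappa(\vec{x})$, so no contradiction with $P_{tot}=\kappa(\vec{x})$ is obtained. The correct conclusion from ``$n_1$ never measures non-trivially'' is that every EPR outcome must involve party $n_1$ (zeroing $x_{n_1}$ by another party's measurement requires $a_\lambda=0$, which leaves a product state), hence $P_{tot}\le 2\eta(\vec{x})$ by part (II); one then needs $2\eta(\vec{x})<\kappa(\vec{x})$, i.e.\ $\eta(\vec{x})<\sum_{i\ge 2}x_{n_i}$, which holds strictly for genuine W-class states because $1-\prod_{i\ge2}\bigl(1-x_{n_i}/x_{n_1}\bigr)<\sum_{i\ge2}x_{n_i}/x_{n_1}$ whenever at least two of the $x_{n_i}$, $i\ge 2$, are nonzero. (The paper itself is terse here, deriving strictness only implicitly from the strict-decrease clause of Theorem~\ref{Thm:montonemain}(II), so you were right to flag this as the delicate point; but your proposed resolution does not close it.)
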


\begin{figure}[t]
\includegraphics[scale=0.6]{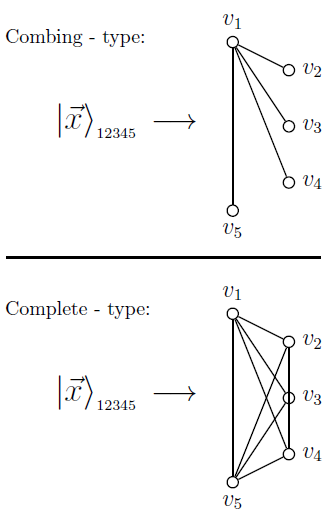}
\caption{\label{etakappafig} Distillation configurations for $\eta$ vs. $\kappa$.  TOP:  A ``combing-type'' distillation: when $x_0=0$, $2\eta(\vec{x})$ is the optimal probability for a random distillation in which party $n_1$ shares one half of each EPR pair.  BOTTOM:  A ``complete-type'' distillation: when $x_0=0$, $\kappa(\vec{x})$ gives the optimal probability for a random distillation in which the target pairs are any two of the parties.}
\end{figure}

\begin{proof}
First recall that $\kappa(\Phi^{(ij)})=1$.  Then the upper bounds follow from Theorem \ref{Thm:montonemain} and the K-T monotones.  Assume now that $x_0=0$.  To show that the upper bounds are effectively tight, we construct a specific protocol based on an ``equal or vanish'' (e/v) measuring scheme \cite{Fortescue-2007a}.  On its own, an e/v measuring scheme is just one way in which a W-class state $\ket{\varphi}_{1...N}$ can be converted into either EPR pairs or W states $\ket{W_m}$ for $3\leq m\leq N$.  Each party $k$ performs a two-outcome measurement for which outcome one is a state whose $k^{th}$ component equals the maximum component, and outcome two is a state whose $k^{th}$ component is zero.  The specific measurement operators are given by $M_1=diag[\sqrt{\tfrac{x_k}{x_{n_1}}},1]$ and $M_2=diag[\sqrt{1-\tfrac{x_k}{x_{n_1}}},0]$.  When each party does this, the possible resultant states are $\ket{\Phi^{(n_1k)}}$ for $n_2\leq k\leq N$, $\ket{W_m}$ for $3\leq m\leq N$, or a product state (see Fig. \ref{evfig}).  

For a complete-type distillation, the parties first perform e/v measurements and then implement the Fortescue-Lo Protocol on the resultant $\ket{W_m}$ states.  When $x_{n_1}=x_{n_2}$ for an initial state $\vec{x}$, a product state is never obtained by the e/v measurements, and the total success probability is therefore arbitrarily close to one.  When $x_{n_1}>x_{n_2}$, we prove the success rate by induction on the number of parties.  For $N=2$, the rate of $\kappa(\vec{x})=2x_{n_2}$ can be achieved \cite{Lo-1997a}.  Suppose now that probability $\kappa$ is obtained arbitrarily close with $N-1$ parties, and consider the $N$-party case.  If party $n_2$ is the first to perform an e/v measurement, then with probability $q$ this measurement will raise his component to equal the largest; i.e. the resultant state $\vec{y}$ has $y_{n_1}=y_{n_2}$.  Thus, random EPR distillation can be accomplished deterministically on $\vec{y}$.  For the ``vanish'' outcome occurring with probability $1-q$, the resultant state $\vec{z}$ is shared among $N-1$ qubits with $z_{n_i}=x_{n_i}\frac{x_{n_1}-x_{n_2}}{x_{n_1}(1-q)}$ for $n_i\not=2$.  By the inductive hypothesis, we then have:
\begin{align}
p_{tot}(\vec{x})&=q+(1-q)\big(1-\left(\frac{1}{z_{n_1}}\right)^{N-3}\prod_{i=3}^{N}(z_{n_1}-z_{n_i})\big)\notag\\
&=1-\frac{x_{n_1}-x_{n_2}}{x_{n_1}}\left(\frac{1}{x_{n_1}}\right)^{N-3}\prod_{i=3}^{N}(x_{n_1}-x_{n_i})\notag\\
&=1-\left(\frac{1}{x_{n_1}}\right)^{N-2}\prod_{i=2}^{N}(x_{n_1}-x_{n_i}).
\end{align}

For a combing-type distillation, when $x_k\leq x_l$ for some party $l$, $2x_k$ is known to be an achievable rate \cite{Turgut-2010a, Cui-2010a}.  When $x_k>x_l$ for all parties $l$, the procedure is for each party to perform an e/v measurement (in any order), except that when the first party $l$ obtains an ``equal'' outcome, a non-random EPR distillation is made between party $k$ and $l$.  This occurs with total probability $2x_l$, and a completely analogous inductive argument to the one given above shows that this full measurement scheme succeeds with probability exactly equal to $\eta(\vec{x})$.
\end{proof}

\begin{figure}[t]
\includegraphics[scale=0.6]{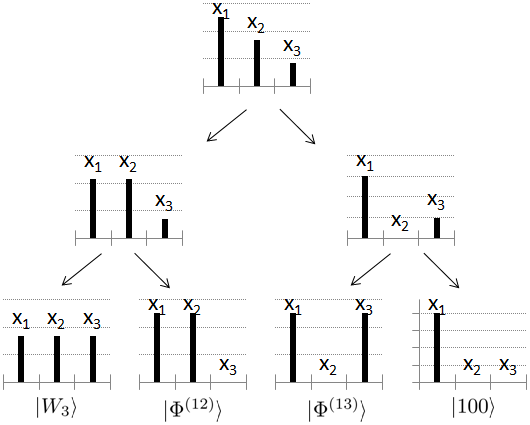}
\caption{\label{evfig} A three qubit ``equal or vanish'' measurement scheme.  The initial state is $(x_1,x_2,x_3)$ with $x_1>x_2>x_3$ and $x_0=0$.  Bob (party 2) measures first and either obtains a state in which his component is a maximum, or he becomes entangled from the other two.  In the next round Charlie (party 3) performs the same type of measurement.  The possible outcome states are $\ket{W_3}$, $\ket{\Phi^{(12)}}$, $\ket{\Phi^{(13)}}$, or a product state.  A ``complete-type'' distillation begins with this measurement scheme and then the Fortescue-Lo Protocol is performed on the $\ket{W_3}$ outcome.  A ``combing-type'' distillation is exactly this measurement scheme except that the pre-measurement state of $\ket{W_3}$ is converted into either $\ket{\Phi^{(12)}}$ or a product state (and not $\ket{W_3}$.}
\end{figure}

\begin{remark}
We make two remarks here.  First, for three qubit systems, combing and complete-type transformations represent the only two types of random distillations.  Thus, for three qubit states with $x_0=0$, Theorem \ref{Thm:RDprobs} gives a complete solution to transformation (\hyperlink{star}{$\star$}).  Second, a natural question is whether the ``equal or measurement'' scheme is always optimal for distilling EPR pairs.  In other words, for some random distillation configuration graph $\mathcal{G}$ is it always best to first perform e/v measurements, and then implement the Fortescue-Lo Protocol?  We have found that this is not the case and we describe specific counterexamples in Ref. \cite{Cui-2011a}.
\end{remark}

\section{SEP VS. LOCC}
\label{Sect:SEPvsLOCC}

In this section we use results from Section \ref{Sect:SEP} and Theorem \ref{Thm:RDprobs} to compare the distillation performances of SEP and LOCC.  In particular we consider an $N$-qubit combing-type distillation.  

The state we consider is $\ket{\psi_{1/2}}_{1...N}=\sqrt{\tfrac{1}{2}}\ket{10....0}+\sqrt{\tfrac{1}{2(1-N)}}\left(\ket{01...0}+...+\ket{00...1}\right)$.  By LOCC, the optimal probability for a combing-type transformation is
\begin{equation}
2\eta(\psi_{1/2})=1-(1-\frac{1}{N-1})^{N-1}\longrightarrow 1-e^{-1}
\end{equation}
where we have taken the limit for large $N$.  However, it is easy to see that the following separable operators (defined up to a reordering of spaces) represent a complete measurement which, with total probability one, will obtain an EPR pair shared by the first party:
\begin{align}
M_k&=\mathbb{I}^{(1)}\otimes\sqrt{\tfrac{1}{N-1}}\op{0}{0}^{(k)}+\op{1}{1}^{(k)}\bigotimes_{j\not=1,k}^N\op{0}{0}^{(j)}\notag\\
&\hspace{2cm}\text{for $1<k\leq N$},\notag\\
M_0&=\sqrt{\mathbb{I}-\sum_{i=1}^N M_k^\dagger M_k}.
\end{align}
We plot this separation between LOCC and SEP as a function of $N$ in Fig. \ref{SEPvsLOCC2}.

\begin{figure}[t]
\includegraphics[scale=0.6]{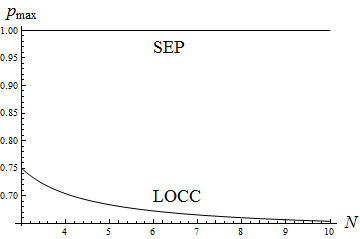}
\caption{\label{SEPvsLOCC2} LOCC vs. SEP for the maximum probability of party 1 become EPR entangled as a function of $N$ when initial state is $\sqrt{\tfrac{1}{2}}\ket{10....0}+\sqrt{\tfrac{1}{2(1-N)}}\left(\ket{01...0}+...+\ket{00...1}\right)$.  The LOCC probability is $1-(1-\frac{1}{N-1})^{N-1}$.  A gap of $37\%$ exists between SEP and LOCC.}
\end{figure}

\section{Multi-copy Distillations}
\label{Sect:ncopy}

So far we have only considered transformations of a single W-class state.  However, in this section, we consider a particular $n$-copy variant of transformation (\hyperlink{star}{$\star$}).  While the following discussion pertains to the tripartite case, its generalization to more parties is straightforward.  

Suppose the trio starts with $n$ copies of the state $\ket{\psi_{1/2}}=\sqrt{1/2}\ket{100}+1/2(\ket{010}+\ket{001})$, and they wish to distill $n$ EPR pairs such that Alice is always one of the shareholders (actually Bob and Charlie need not have the same components in the following argument).  The problem can be phrased as follows:
\begin{align}
\label{Eq:finitedistill}
&\qquad\ket{\psi_{1/2}}^{\otimes n}\to \ket{\Psi^{(AB)}}^{\otimes k}\ket{\Psi^{(AC)}}^{\otimes n-k}\notag\\
&\text{with probability $p_k=\tbinom{n}{k}/2^n$}\qquad \text{for $k=0,...n$}.
\end{align}
This is a combing-type transformation, and by the previous section we know that for any $n$, the transformation can always be completed with probability one by SEP.  On the other hand, even if the parties are allowed to act coherently on the $n$ copies of their local state, the following theorem still gives a no-go result.
\begin{theorem}
The transformation given by Eq. \eqref{Eq:finitedistill} is not possible by LOCC for any $n$.  Nor is it possible for any other distribution of the specified target states.
\end{theorem}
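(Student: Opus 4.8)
The plan is to argue by contradiction. Suppose some LOCC protocol realizes transformation \eqref{Eq:finitedistill}, or, more generally, that for some probability distribution $\{q_k\}_{k=0}^{n}$ there is an LOCC protocol converting $\ket{\psi_{1/2}}^{\otimes n}$ into $\ket{\Psi^{(AB)}}^{\otimes k}\ket{\Psi^{(AC)}}^{\otimes n-k}$ with probability $q_k$ and total success probability one. I will denote reduced states by $\rho_A,\rho_B,\rho_C$.

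The first step is to show that Alice can be taken to be passive. Since $\ket{\psi_{1/2}}$ has reduced state $\mathbb{I}/2$ on Alice, the initial state $\ket{\psi_{1/2}}^{\otimes n}$ has $\rho_A=\mathbb{I}/2^{n}$, and so does every target state; hence the entanglement entropy across the $A|BC$ cut equals $n$ -- the largest value possible for Alice's $n$ qubits -- both initially and at every outcome. This entropy is an entanglement monotone, so it cannot increase on average under LOCC; since every branch of the protocol must terminate at a state saturating the value $n$, it must remain equal to $n$ at every step of every branch. Consequently every Kraus operator $K$ that Alice applies obeys $KK^{\dagger}\propto\mathbb{I}$ on the relevant subspace, i.e.\ $K$ is proportional to an isometry, so Alice's operations can be deferred to a single concluding local unitary. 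We may therefore assume the protocol is LOCC between Bob and Charlie alone, acting on $\ket{\psi_{1/2}}^{\otimes n}$ while Alice's $n$-qubit register is held fixed as a maximally entangled purifying reference.

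The second step reduces the problem to a single copy. Because $\ket{\psi_{1/2}}^{\otimes n}$ and the whole family of target states are invariant under the simultaneous permutation of the $n$ copies held by each party, I would symmetrize the protocol over $S_{n}$ and assume it is permutation covariant; since Alice is passive her $n$ qubits are individually identifiable before and after the protocol. Marginally, each of Alice's qubits then becomes EPR-entangled -- with Bob or with Charlie, and with nothing else -- with probability exactly one, and, by the argument of the first step applied to a single Alice qubit, that qubit stays maximally entangled with the rest throughout. The remaining $n-1$ copies enter only as a W-class resource that Bob and Charlie hold and eventually return as the other $n-1$ output pairs. So the assumed protocol would comb one copy of $\ket{\psi_{1/2}}$, with Alice always a shareholder, with probability one while merely borrowing additional W-class entanglement -- against Theorem \ref{Thm:RDprobs}, which bounds this probability by $2\eta(\psi_{1/2})=3/4<1$.

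The main obstacle is precisely that last inference: one must rule out that the $n-1$ spectator copies catalytically raise the single-copy combing probability from $3/4$ to $1$. I would close this by single-letterizing the relevant figure of merit -- the expected number of copies in which Alice ends up EPR-entangled, which the transformation forces to equal $n$ -- showing it is controlled copy-by-copy even under coherent joint processing. Concretely, this means establishing that the combing bound $2\eta$ of Theorem \ref{Thm:RDprobs} is (sub)additive on tensor powers of a W-class state, so that the expected number of Alice-pairs distillable from $\ket{\psi_{1/2}}^{\otimes n}$ by LOCC is at most $n\cdot 2\eta(\psi_{1/2})=\tfrac34 n<n$; in particular it cannot reach $n$. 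Proving this collective bound -- i.e.\ excluding any enhancement from entangled operations across the copies -- is where essentially all the work lies; everything else is bookkeeping around the entropy argument of the first step, and since that reasoning never used the particular values $q_k$, it rules out every distribution over the specified targets at once, so no reallocation of the target probabilities can help.
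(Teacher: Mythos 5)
Your first step (entropy rigidity across the $A{:}BC$ cut forcing Alice's Kraus operators to be proportional to unitaries) is exactly the paper's opening move, but from there your argument has a genuine gap, and you have named it yourself: everything hinges on the claim that the single-copy combing bound $2\eta$ controls collective LOCC processing of $\ket{\psi_{1/2}}^{\otimes n}$, i.e.\ that the expected number of exact Alice-shared EPR pairs is at most $n\cdot 2\eta(\psi_{1/2})=\tfrac34 n$. Nothing in the paper supplies this: the monotones $\eta$ and $\kappa$ are defined only on single-copy $N$-qubit W-class states, and $\ket{\psi_{1/2}}^{\otimes n}$ (three parties, each holding $n$ qubits) is not such a state, so Theorem \ref{Thm:RDprobs} simply does not apply to it. Your intermediate reduction is also not sound as stated: symmetrizing over copies and observing that \emph{marginally} each of Alice's qubits ends up maximally entangled with Bob or Charlie does not produce a single-copy LOCC protocol in which the other $n-1$ copies are ``merely borrowed''---shared entanglement in spectator copies acting as a catalyst is precisely the phenomenon you must exclude, and excluding it is the whole theorem. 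Moreover, the paper's own closing discussion (the Yang--Eisert combing protocol achieves unit efficiency asymptotically) is a warning sign that a per-copy yield bound of the form $\tfrac34 n$ is at best extremely delicate: the impossibility cannot be a robust rate statement, it must be a rigidity statement tied to the exactness of the target ensemble.

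The paper's actual proof exploits exactly that rigidity, and it is where your proposal and the paper part ways. Since every branch must preserve the reduced entropy $n$ across $A{:}BC$, Alice is confined to unitaries (as you argue); but then Bob and Charlie's joint state is $\bigl(\tfrac12\op{00}{00}+\tfrac12\op{\Psi}{\Psi}\bigr)^{\otimes n}=2^{-n}\sum_x\op{\tilde{x}}{\tilde{x}}$, and entropy preservation in every branch forces, for each of Bob's measurement operators $M$, that the values $\bra{\tilde{x}}M^\dagger M\otimes\mathbb{I}\ket{\tilde{x}}$ are constant and the vectors $M\otimes\mathbb{I}\ket{\tilde{x}}$ are orthogonal. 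The combinatorial Lemma \ref{Lem2}, which translates matrix elements in the $\{\ket{\tilde{x}}\}$ basis into computational-basis matrix elements, then shows $M^\dagger M\propto\mathbb{I}$, so Bob (and, by the same argument in later rounds, Charlie) can also only apply local unitaries. Hence no party can ever implement a nontrivial measurement, which kills the transformation for every $n$ and for every distribution over the specified targets at once. If you want to salvage your route, you would need a genuinely new multi-copy monotone (or an additivity proof for $2\eta$ under collective LOCC), which is a substantial open-ended task rather than bookkeeping; the paper's Lemma \ref{Lem2} is the device that makes such machinery unnecessary.
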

We give the proof below.  The only technical component needed is Lemma \ref{Lem2} which relies heavily on the special form of the state $\ket{\psi_{1/2}}$.  The main idea is that when viewed as a bipartite transformation with respect to A:BC, the reduced state entropies are the same for the initial and all the final states.  Consequently, the reduced state entropy must remain invariant for each measurement outcome in the LOCC protocol, and following the lines of Theorem 1 in Ref. \cite{Bennett-1999c}, this implies that Alice is restricted to only performing local unitaries.

However, due to the form of $\ket{\psi_{1/2}}$, invariance of the reduced state entropy also implies that Bob and Charlie can only perform local unitaries, as we will now show.  Without loss of generality, suppose that Bob acts first before Charlie in the protocol.  Since Alice can only have performed a local unitary up to this point, Bob and Charlie's reduced state is
\begin{equation}
(\frac{1}{2}\op{00}{00}+\frac{1}{2}\op{\Psi}{\Psi})^{\otimes n}=\frac{1}{2^n}\sum_{x\in\{0,1\}^{n}}\op{\tilde{x}}{\tilde{x}}
\end{equation}
where we introduce the notation that for a binary vector $x\in\{0,1\}^{n}$ with components $x_i\in\{0,1\}$, the corresponding string $\tilde{x}$ has symbolic components $\tilde{x}_i=00$ if $x_i=0$ and $\tilde{x}_i=\Psi$ if $x_i=1$. For example,
\[x=010\qquad\Rightarrow\qquad \ket{\tilde{x}}=\ket{00}\ket{\Psi}\ket{00}.\]
The reason for introducing this notation becomes evident in the following.
\begin{lemma}
\label{Lem2}
\noindent{\upshape (i)}For $x,y\in\{0,1\}^n$, let $S\subset\{0,1\}^n$ be the set such that $b\in S$ if $b_i=0$ whenever $x_i\cdot y_i\not=1$.  Then for any operator $A$ acting on Bob's system,
\begin{equation}
\label{Eq:Lem2}
\bra{\tilde{x}}A\otimes\mathbb{I}\ket{\tilde{y}}\propto\sum_{b\in S}\bra{x+b}A\ket{y+b}.
\end{equation}
\noindent\upshape (ii)
If $\bra{\tilde{x}}A\otimes\mathbb{I}\ket{\tilde{y}}=0$ for all $x\not=y\in\{0,1\}^n$, then $\bra{x}A\ket{y}=0$ for all $x\not=y\in\{0,1\}^n$. 
\noindent\upshape (iii)  If $\bra{\tilde{x}}A\otimes\mathbb{I}\ket{\tilde{x}}=k$ for all $x\in\{0,1\}^n$, then $\bra{x}A\ket{x}=k$ for all $x\in\{0,1\}^n$.
\end{lemma}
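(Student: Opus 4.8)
The plan is to establish (i) by a direct expansion of $\ket{\tilde{x}}$ in the computational basis and then to obtain (ii) and (iii) as Boolean M\"obius-inversion (i.e.\ induction-on-Hamming-weight) arguments. For (i), I would note that the $i$-th tensor factor of $\ket{\tilde{x}}$ is $\ket{0}_{B_i}\ket{0}_{C_i}$ when $x_i=0$ and $\tfrac{1}{\sqrt{2}}(\ket{0}_{B_i}\ket{1}_{C_i}+\ket{1}_{B_i}\ket{0}_{C_i})$ when $x_i=1$; in the latter case Bob's and Charlie's bits are anti-correlated, so fixing Charlie's bit $c_i$ (with $c_i=0$ forced when $x_i=0$) determines Bob's bit as $x_i\oplus c_i$. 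Thus $\ket{\tilde{x}}=2^{-|x|/2}\sum_{c\le x}\ket{x\oplus c}_B\ket{c}_C$, where $|x|$ is the Hamming weight and $c\le x$ is the componentwise order. Substituting this for $\ket{\tilde{x}}$ and $\ket{\tilde{y}}$ in $\braket{\tilde{x}}{A\otimes\mathbb{I}}{\tilde{y}}$ and using $\ip{c}{c'}=\delta_{c,c'}$ on Charlie's register collapses the double sum to $c=c'$ with $c\le x$ and $c\le y$, i.e.\ $c\le x\wedge y$ --- which is exactly the condition ``$c_i=0$ whenever $x_iy_i\neq1$'' defining $S$. Since the ``$+$'' in the statement is addition in $\mathbb{F}_2^n$ (componentwise XOR), $x+c$ is exactly $x\oplus c$, which gives $\braket{\tilde{x}}{A\otimes\mathbb{I}}{\tilde{y}}=2^{-(|x|+|y|)/2}\sum_{c\in S}\braket{x+c}{A}{y+c}$ --- statement (i) with an explicit, nonzero proportionality constant.

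For (ii), the crucial observation is that every summand $\braket{x+c}{A}{y+c}$ in the equation attached to a pair $(x,y)$ satisfies $(x+c)\oplus(y+c)=x\oplus y$, so the hypotheses block-diagonalize according to $w:=x\oplus y$: for each fixed $w\neq0$ one needs only the pairs with $x\oplus y=w$, and these involve only matrix elements $\braket{u}{A}{v}$ with $u\oplus v=w$. I would parametrize such a pair by $p:=x|_{\mathrm{supp}(w)}$ and the common restriction $q:=x|_{\overline{\mathrm{supp}(w)}}=y|_{\overline{\mathrm{supp}(w)}}$; then $x\wedge y$ equals $q$ off $\mathrm{supp}(w)$ and $0$ on it, so $c$ ranges over $\{c\le q\}$, and, writing a string by its ($w$-part, off-$w$-part), $x+c=(p,\,q\oplus c)$ and $y+c=(\bar{p},\,q\oplus c)$, where $\bar p$ is the complement of $p$ on $\mathrm{supp}(w)$ (because $y$ differs from $x$ on all of $\mathrm{supp}(w)$). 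As $c$ runs over subsets of $\mathrm{supp}(q)$, $r:=q\oplus c$ runs over all $r\le q$, so the hypothesis becomes: for each $p$ and every $q$, $\sum_{r\le q}\braket{(p,r)}{A}{(\bar{p},r)}=0$. Strong induction on $|q|$ (base $q=0$; the inductive step isolates the top term and uses that every proper subset of $q$ has strictly smaller weight) forces $\braket{(p,r)}{A}{(\bar{p},r)}=0$ for all $p,r$; and an arbitrary pair $u\neq v$ has this form with $w=u\oplus v$, $p=u|_{\mathrm{supp}(w)}$, $r=u|_{\overline{\mathrm{supp}(w)}}$ and $v=(\bar{p},r)$, giving $\braket{u}{A}{v}=0$.

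For (iii), I would set $y=x$ in (i): then $S=\{c\le x\}$, the string $x+c=x\oplus c$ runs over all $u\le x$ as $c$ does, and the constant is $2^{-|x|}$, so $\braket{\tilde{x}}{A\otimes\mathbb{I}}{\tilde{x}}=k$ becomes $\sum_{u\le x}\braket{u}{A}{u}=2^{|x|}k$ for every $x$. Since $\sum_{u\le x}k=2^{|x|}k$ as well, subtracting gives $\sum_{u\le x}\big(\braket{u}{A}{u}-k\big)=0$ for all $x$, and the same induction on $|x|$ used in (ii) forces $\braket{u}{A}{u}=k$ for every $u$.

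I expect the only genuine work to be the bookkeeping in (i) --- writing $\ket{\tilde{x}}$ as the correct sum over Charlie's bit strings, checking that the surviving index set is precisely $S$, and keeping ``$x+c$'' consistently as addition modulo $2$ (a legitimate bit string exactly because $c\le x\wedge y$) --- together with the decoupling remark in (ii) that the linear system splits according to the ``syndrome'' $x\oplus y$, which is what permits applying the Boolean M\"obius inversion one block at a time. Given those, parts (ii) and (iii) are routine.
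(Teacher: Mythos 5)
Your proof is correct and takes essentially the same route as the paper: part (i) by expanding the $\ket{\Psi}$ factors (the paper does this via site-by-site proportionality relations, you via the explicit sum over Charlie's bit strings, which also fixes the constant $2^{-(|x|+|y|)/2}$), and parts (ii) and (iii) by induction on the number of coordinates where both strings equal $1$ --- your weight $|q|$ is exactly the paper's induction variable $\log|S|$. The only difference is cosmetic: you make explicit the decoupling by the syndrome $w=x\oplus y$ and the M\"obius-inversion framing, which the paper leaves implicit.
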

\begin{proof}
Part (i) can be verified from the relations $\bra{00}T\otimes\mathbb{I}\ket{00}\propto\bra{0}T\ket{0}$, $\bra{00}T\otimes\mathbb{I}\ket{\Psi}\propto\bra{0}T\ket{1}$, $\bra{\Psi}T\otimes\mathbb{I}\ket{00}\propto\bra{1}T\ket{0}$, and $\bra{\Psi}T\otimes\mathbb{I}\ket{\Psi}\propto\bra{1}T\ket{1}+\bra{0}T\ket{0}$.  For (ii), we use induction on $\log|S|$, i.e. on the number of coordinates simultaneously equal to $1$ in both $x$ and $y$.  By part (i), when $\log|S|=0$, then the statement is easily seen to be true from Eq. \eqref{Eq:Lem2} since the only $b\in S$ is the all zero vector $\vec{0}$.  Now suppose the claim is true when $\log|S|=m$, and consider two vectors $x$, $y$ such that $\log|S|=m+1$.  Again by part (i),
\[0=\bra{\tilde{x}}A\otimes\mathbb{I}\ket{\tilde{y}}\propto \sum_{\vec{0}\not=b\in S}\bra{x+b}A\ket{y+b}+\bra{x}A\ket{y}.\]
But for $\vec{0}\not=b\in S$, the strings $x+b$ and $y+b$ will have no more than $m$ coordinates that are both equal to $1$.  Therefore, by the inductive assumption each term in the sum vanishes, and so $\bra{x}A\ket{y}=0$.  Part (iii) can be proven by using a similar inductive argument and noting that for $\bra{\tilde{x}}A\otimes\mathbb{I}\ket{\tilde{x}}$, the proportionality factor in part (i) is $1/|S|$.
\end{proof}
\noindent Now, let $M$ be one Bob's measurement operators.  By invariance of the von Neumann entropy, we must have \cite{Nielsen-2000a}:
\begin{align}
n&=S\left(\frac{1}{2^np_M}\sum_{x\in\{0,1\}^{n}}M\otimes\mathbb{I}\op{\tilde{x}}{\tilde{x}}M^\dagger\otimes\mathbb{I}\right)\notag\\
&\leq H\bigg\{\frac{\bra{\tilde{x}}M^\dagger M\otimes\mathbb{I}\ket{\tilde{x}}}{2^np_M}\bigg\}_{x\in\{0,1\}^{n}}\leq n
\end{align}
which requires that $\bra{\tilde{x}}M^\dagger M\otimes\mathbb{I}\ket{\tilde{x}}$ is some positive constant for all $x\in\{0,1\}^{n}$ and the $M\otimes\mathbb{I}\ket{\tilde{x}}$ are orthogonal.  By (ii) and (iii) of Lemma \ref{Lem2}, this is only possible if $M^\dagger M$ is proportional to the identity.  In other words, $M$ is of the form $\sqrt{p}U$ for some unitary $U$.  

In the next round of measurement it will be Charlie's turn.  However, the above argument will apply for Charlie's measurement even after Bob performs an LU rotation.  Thus, in all rounds the parties can only perform local unitaries and therefore transformation \eqref{Eq:finitedistill} cannot be accomplished by any LOCC protocol.

Up to a conditional local unitary transformation, transformation \eqref{Eq:finitedistill} can be phrased as the mixed state transformation $\op{\psi_{1/2}}{\psi_{1/2}}^{\otimes n}\to\sigma^{\otimes n}$ where \[\sigma=1/2(\op{\Psi^{(AB)}}{\Psi^{(AB)}}\otimes\op{0}{0}+\op{\Psi^{(AC)}}{\Psi^{(AC)}}\otimes \op{1}{1}).\]
Here, the $\ket{0}$ and $\ket{1}$ is classical information accessible to all parties, and it encodes which particular duo holds the EPR state.  Thus, LOCC impossibility of transformation \eqref{Eq:finitedistill} means that the transformation $\op{\psi_{1/2}}{\psi_{1/2}}^{\otimes n}\to\sigma^{\otimes n}$ is LOCC infeasible.

Finally, we can consider the asymptotic setting and when the trio wishes to distill maximal entanglement with unit efficiency such that the entanglement is distributed equally to pairs Alice-Bob and Alice-Charlie.  More precisely, we seek for every $n$ an LOCC map $\Gamma^n$ such that 
\[tr[\Gamma^n(\psi^{\otimes n}_{1/2})\cdot\Psi^{(AB)\otimes n/2}\Psi^{(AC)\otimes n/2}]\to 1.\]
In fact, as given by the Entanglement Combing protocol of Ref. \cite{Yang-2009a}, this transformation is asymptotically feasible.  Moreover, their protocol holds for various distributions of final entanglement and not just equal shares between Alice-Bob and Alice-Charlie.  Consequently, we've shown that for particular state transformations, SEP $>$ LOCC regardless of the number of copies considered.  However, when the same transformations are considered in asymptotic form, we have that SEP $=$ LOCC.

\section{Conclusion}
\label{Sect:Conclusion}
In this article, we have studied the random distillation of W-class states by separable operations and LOCC.  Based on the transformation results of bipartite pure states \cite{Gheorghiu-2008a}, one may suspect that SEP and LOCC have equivalent transformation capabilities.  However, here we have shown that SEP is strictly more powerful.  

For separable operations, the general solution to transformation (\href{star}{$\star$}) can be solved by semi-definite programming optimization when $x_0=0$. This then places an upper bound on the problem for LOCC.  Tightening the LOCC bound requires analyzing each configuration graph in a case-by-case basis.  Two particular transformations we have considered are combing and complete-type transformations (Fig. \ref{etakappafig}).    Theorem \ref{Thm:RDprobs} provides an upper bound for the success probabilities of these transformations.  For states with $x_0=0$, the upper bounds can be approached arbitrarily close.  

To obtain these results, our general strategy has been to (i) start with a general W-class state and compute the combing or complete-type transformation probability using an ``equal or vanish'' protocol, and (ii) prove that the general probability expression (as a function of the components $x_i$) is an entanglement monotone.  This strategy isolates essential properties of LOCC beyond the tensor product structure of its measurement operators as it has generated entanglement monotones that can be increased under separable operations.

When $x_0\not=0$, we know these upper bounds are not tight, a prime example being the state $\sqrt{1-3s}\ket{000}+\sqrt{s}\left(\ket{100}+\ket{010}+\ket{001}\right)$ with $s>0$.  For a combing-type transformation of this state with Alice always being a shareholder in the outcome entanglement, the probability of success is upper bounded by $2\eta=2s$.  However, it is known that such a rate cannot be achieved \cite{Kintas-2010a, Cui-2010b}.  We leave it as an open problem to determine the optimal random distillation rates when $x_0\not=0$.

In terms of success probability, Fig. \ref{SEPvsLOCC2} shows a maximum percent difference of roughly 37\% for the combing-type distillation.  We conjecture that much larger gaps between SEP and LOCC exist than the ones shown in this article.  Even for the state $\ket{W_N}$, we predict that different distillation configuration graphs $\mathcal{G}$ restrict the feasible probabilities for LOCC much stronger than the separable upper bounds of Theorem \ref{Thm:RDprobs} (see Ref. \cite{Cui-2011a} for more details).  

Finally, we observe that for particular random distillations, the advantage of SEP over LOCC does not appear in the asymptotic setting, while it does when only finite resources are considered, regardless of the amount.  While we have shown this specifically for transformation \eqref{Eq:finitedistill}, the result holds true for more general combing-type transformations.  This suggests the intriguing conjecture that SEP and LOCC are operationally equivalent in the many-copy limit.  It is our hope that this article will lead to a deeper understanding of multipartite entanglement and the structure of LOCC.

\begin{acknowledgments}
We thank Andreas Winter, Debbie Leung, Graeme Smith, and John Smolin for providing helpful suggestions and discussing related material.  We also thank Jonathan Oppenheim, Ben Fortescue, Sandu Popescu and Runyao Duan for offering insightful comments on the subject.  We thank the financial support from funding agencies including NSERC, QuantumWorks, the CRC program and CIFAR. 
\end{acknowledgments}

\newpage

\appendix

\begin{widetext}

\section{Dual solution to $\ket{W_N}$ distillation by SEP}
\label{Apx:SDP}

We begin by writing Equations \eqref{Eq:incompleteChoi} and \eqref{Eq:ChoiPPT} in standard semi-definite programming (SDP) form.  Fix some encoding function $\phi:E\to|E|$ and define the matrices:
\begin{align}
F_1&=\left(\begin{smallmatrix}1&0&0&0\\0&0&0&0\\0&0&0&0\\0&0&0&0\end{smallmatrix}\right)\oplus\left(\begin{smallmatrix}1&0&0&0\\0&0&0&0\\0&0&0&0\\0&0&0&0\end{smallmatrix}\right)&
F_2&=\left(\begin{smallmatrix}0&1&0&0\\1&0&0&0\\0&0&0&0\\0&0&0&0\end{smallmatrix}\right)\oplus\left(\begin{smallmatrix}0&1&0&0\\1&0&0&0\\0&0&0&0\\0&0&0&0\end{smallmatrix}\right)&
F_3&=\left(\begin{smallmatrix}0&0&1&0\\0&0&0&0\\1&0&0&0\\0&0&0&0\end{smallmatrix}\right)\oplus\left(\begin{smallmatrix}0&0&1&0\\0&0&0&0\\1&0&0&0\\0&0&0&0\end{smallmatrix}\right)\notag\\
F_4&=\left(\begin{smallmatrix}0&0&0&1\\0&0&0&0\\0&0&0&0\\1&0&0&0\end{smallmatrix}\right)\oplus\left(\begin{smallmatrix}0&0&0&0\\0&0&1&0\\0&1&0&0\\0&0&0&0\end{smallmatrix}\right)&
F_5&=\left(\begin{smallmatrix}0&0&0&0\\0&0&0&1\\0&0&0&0\\0&1&0&0\end{smallmatrix}\right)\oplus\left(\begin{smallmatrix}0&0&0&0\\0&0&0&1\\0&0&0&0\\0&1&0&0\end{smallmatrix}\right)&
F_6&=\left(\begin{smallmatrix}0&0&0&0\\0&0&0&0\\0&0&0&1\\0&0&1&0\end{smallmatrix}\right)\oplus\left(\begin{smallmatrix}0&0&0&0\\0&0&0&0\\0&0&0&1\\0&0&1&0\end{smallmatrix}\right)\notag\\
F_7&=\left(\begin{smallmatrix}0&0&0&0\\0&0&0&0\\0&0&0&0\\0&0&0&1\end{smallmatrix}\right)\oplus\left(\begin{smallmatrix}0&0&0&0\\0&0&0&0\\0&0&0&0\\0&0&0&1\end{smallmatrix}\right)&
\end{align}
\begin{align}
\label{Eq:SDPGmat}
G^{(ij)}_1=&[-1]\;\;\;\bigoplus_{k=1}^{|E|} \;\;\;[0]\bigoplus_{k=1}^{\phi(i,j)-1}\;\;\;[0]_{4\times 4}\oplus F_1\bigoplus_{k=\phi(i,j)+1}^{|E|}[0]_{4\times 4}\notag\\
G^{(ij)}_2=&[0]\;\;\;\bigoplus_{k=1}^{|E|} \;\;\;[0]\bigoplus_{k=1}^{\phi(i,j)-1}\;\;\;[0]_{4\times 4}\oplus F_2\bigoplus_{k=\phi(i,j)+1}^{|E|}[0]_{4\times 4}\notag\\
G^{(ij)}_3=&[0]\;\;\;\bigoplus_{k=1}^{|E|} \;\;\;[0]\bigoplus_{k=1}^{\phi(i,j)-1}\;\;\;[0]_{4\times 4}\oplus F_3\bigoplus_{k=\phi(i,j)+1}^{|E|}[0]_{4\times 4}\notag\\
&...\notag\\
G^{(ij)}_7=&[0]\;\;\;\bigoplus_{k=1}^{\phi(i,j)-1}[0]\oplus [-1]\bigoplus_{k=\phi(i,j)+1}^{|E|}\;\;\;[0]\bigoplus_{k=1}^{\phi(i,j)-1}[0]_{4\times 4}\oplus F_7\bigoplus_{k=\phi(i,j)+1}^{|E|}[0]_{4\times 4}\notag\\
G_0=&[1]\;\;\;\bigoplus_{k=1}^{|E|}\;\;\;[1]\;\;\;\bigoplus_{\phi(i,j)=1}^{|E|}\;\;\;\left[\begin{pmatrix}0&0&0&0\\0&\tfrac{Np_{ij}}{2}&\tfrac{Np_{ij}}{2}&0\\0&\tfrac{Np_{ij}}{2}&\tfrac{Np_{ij}}{2}&0\\0&0&0&0\end{pmatrix}\oplus\begin{pmatrix}0&0&0&\tfrac{Np_{ij}}{2}\\0&\tfrac{Np_{ij}}{2}&0&0\\0&0&\tfrac{Np_{ij}}{2}&0\\\tfrac{Np_{ij}}{2}&0&0&0\end{pmatrix}\right].
\end{align}
Then Eqns. \eqref{Eq:ChoiPPT} and \eqref{Eq:incompleteChoi} are captured by the existence of $x_{k}^{(ij)}\in\mathbb{C}$ such that
\begin{equation}
\label{Eq:Semdef}
G_0+\sum_{(i,j)\in E}\sum_{m=1}^7x_m^{(ij)}G_m^{(ij)}\geq 0
\end{equation}
with the additional constraints that 
\begin{equation}
\label{Eq:probconst}
\sum_{(i,j)\in E_k}\frac{Np_{ij}}{2}\leq 1\;\;\;\text{for}\;\;\; 1\leq k\leq N.
\end{equation}
The dual problem to this asks 
\begin{align}
\label{Eq:SDPdual}
\max&\;\;-tr(ZG_0)\notag\\
\text{s.t}&\;\;0=tr(ZG_m^{(i,j)})\;\;\;\text{for all}\;\;\; G_m^{(i,j)}\notag\\
&Z\geq 0.
\end{align}
A critical relationship between the dual and primal formulations is that if \eqref{Eq:Semdef} can be satisfied for some $x_k^{(ij)}$, then for any $Z$ satisfying the constraints of $\eqref{Eq:SDPdual}$, we must have $tr(ZG_0)\geq 0$.  Thus infeasibility is proven by the existence of some $Z\geq 0$ such that $tr(ZG^{(ij)}_m)=0$ for all $G_m^{(i,j)}$ and $tr(ZG_0)< 0$.  We construct a certificate for infeasibility as follows.  For each $(i,j)\in E$, define the matrix:
\begin{align}
Z^{(ij)}=[\frac{1}{|E|}]\bigoplus_{k=1}^{\phi(i,j)-1}[0]\oplus[\frac{N^2p_{ij}^2}{4}]\bigoplus_{k=\phi(i,j)+1}^{|E|}[0]\bigoplus_{k=1}^{\phi(i,j)-1}[0]_{8\times 8}\oplus [0]_{4\times 4}\oplus\begin{pmatrix}1&0&0&\frac{-Np_{ij}}{2}\\0&0&0&0\\0&0&0&0\\\frac{-Np_{ij}}{2}&0&0&\frac{N^2p_{ij}^2}{4}\end{pmatrix}\bigoplus_{k=\phi(i,j)+1}^{|E|}[0]_{8\times 8}.
\end{align}
The claim is that the matrix 
\[Z:=\sum_{(i,j)\in E}Z^{(ij)}\]
is dual feasible with $tr(ZG_0)<0$ whenever $\frac{N^2}{4}\sum_{(i,j)\in E} p_{ij}^2>1$.  Indeed, it can easily be seen that $Z\geq 0$ and $tr[ZG_m^{(ij)}]=0$ for $1\leq m\leq 7$ and $(i,j)\in E$.  And finally,
\[tr[ZG_0]=1+\frac{N^2}{4}\sum_{(i,j)\in E}p_{ij}^2-\frac{N^2}{2}\sum_{(i,j)\in E}p_{ij}^2<0.\]
We have thus proven Theorem \ref{Thm:Septhm1}.
\end{widetext}




\bibliography{EricQuantumBib}

\end{document}